\newcites{Methods}{Methods-only references}
\definecolor{navyblue}{rgb}{0.0, 0.0, 0.5}
\newcommand{\ketbra}[2]{|#1\rangle\langle #2|}
\def\lb{\left(}
\def\rb{\right)}
\newcommand{\tr}[1]{\operatorname{tr}\lb#1\rb}
\newcommand{\cH}{\mathcal{H}}
\newcommand{\norm}[1]{\left\lVert#1\right\rVert}
\newcommand{\cB}{\mathcal{B}}
\renewcommand{\phi}{\varphi}
\renewcommand{\epsilon}{\varepsilon}
\theoremstyle{plain}
\newtheorem{theorem}{Theorem}[section]
\newtheorem{corollary}[theorem]{Corollary}
\newtheorem{lemma}[theorem]{Lemma}
\newtheorem{proposition}[theorem]{Proposition}
\theoremstyle{remark}
\newtheorem*{claim*}{Claim}
\newtheorem*{remark*}{Remark}
\newtheorem*{example*}{Example}
\newtheorem*{notation*}{Notation}
\numberwithin{equation}{section}
\begin{document}

\author{\begingroup
\hypersetup{urlcolor=navyblue}
\href{https://orcid.org/0000-0001-7712-6582}{Cambyse Rouz\'{e}
\endgroup}
}
\email[Cambyse Rouz\'{e} ]{rouzecambyse@gmail.com}
 \affiliation{Inria, Télécom Paris - LTCI, Institut Polytechnique de Paris, 91120 Palaiseau, France}

\author{\begingroup
\hypersetup{urlcolor=navyblue}
\href{https://orcid.org/0000-0001-9699-5994}{Daniel Stilck Fran\c{c}a}
\endgroup}
\affiliation{Univ Lyon, ENS Lyon, UCBL, CNRS, Inria, LIP, F-69342, Lyon Cedex 07, France}
\affiliation{Department of Mathematical Sciences, University of Copenhagen, Universitetsparken 5, 2100 Denmark}
\email[Daniel Stilck Fran\c ca ]{dsfranca@math.ku.dk}

\author{\begingroup
\hypersetup{urlcolor=navyblue}
\href{https://orcid.org/0000-0002-5889-4022}{\'Alvaro M. Alhambra
\endgroup}
}
\email[\'Alvaro M. Alhambra ]{alvaro.alhambra@csic.es}
 \affiliation{Instituto de F\'{i}sica T\'{e}orica UAM/CSIC, C. Nicol\'{a}s Cabrera 13-15, Cantoblanco, 28049 Madrid, Spain}

\title[]{Efficient thermalization and universal quantum computing \\
with quantum Gibbs samplers}

\begin{abstract}

The preparation of thermal states of matter is a crucial task in quantum simulation. In this work, we prove that a recently introduced, efficiently implementable dissipative evolution thermalizes to the Gibbs state in time scaling polynomially with system size at high enough temperatures for any Hamiltonian that satisfies a Lieb-Robinson bound, such as local Hamiltonians on a lattice. Furthermore, we show the efficient adiabatic preparation of the associated purifications or ``thermofield double'' states. These results establish the efficient preparation of high-temperature Gibbs states and their purifications. In the low-temperature regime, we show that implementing this family of dissipative evolutions for inverse temperatures polynomial in the system's size is computationally equivalent to polynomial time quantum computations. On a technical level, for high temperatures, our proof makes use of the mapping of the generator of the evolution into a Hamiltonian, and then connecting its convergence to that of the infinite temperature limit. For low temperature, we instead perform a perturbation at zero temperature and resort to circuit-to-Hamiltonian mappings akin to the proof of universality of quantum adiabatic computing. Taken together, our results show that a family of quasi-local dissipative evolutions efficiently prepares a large class of quantum many-body states of interest, and has the potential to mirror the success of classical Monte Carlo methods for quantum many-body systems.

\end{abstract}

\maketitle
\doparttoc

\faketableofcontents

\newpage


\section{Introduction}

Markov Chain Monte Carlo (MCMC) methods are one of the go-to tools for sampling from classical Gibbs states (GS) of spin systems~\cite{levin2017markov}. They exhibit efficiency in practice~\cite{brooks2011handbook} and are also provably efficient in some cases, such as at high temperatures~\cite{martinelli1999lectures}. These methods traditionally rely on ``local updates'', where a few spins are flipped based on local energy changes, a feature crucial for both practical implementation and theoretical analysis. However, extending such algorithms to quantum systems has been a formidable challenge. Despite significant research efforts~\cite{Temme_2011,brandao2019finite,kastoryano2016quantum,bardet2021modified,capel2020modified,bardet2024entropy,bardet2023rapid,zhang2023dissipative,shtanko2021algorithms}, general quantum algorithms capable of (quasi-)local updates that also provably converge to a quantum GS have remained elusive, until the recent breakthroughs of~\cite{Chen2025}. 

In these recent works, the authors present a quantum algorithm that generates a Lindbladian $\mathcal{L}^{(\beta)}(\cdot)$ which is at the same time $i)$ quasi-local and $ii)$ reversible in the Heisenberg picture:
\begin{align*}
\langle X,\mathcal{L}^{(\beta)\dagger}(Y)\rangle_{\sigma_\beta}=\langle \mathcal{L}^{(\beta)\dagger}(X),Y\rangle_{\sigma_\beta}
\end{align*}
where $\langle X,Y\rangle_{\sigma_\beta}:=\operatorname{tr}(X^\dagger\sigma_\beta^{\frac{1}{2}}Y\sigma_\beta^{\frac{1}{2}})$ denotes the KMS inner product. In this case, $\sigma_\beta= {\operatorname{exp}(-\beta H)}/Z_\beta$ is the Gibbs state with partition function $Z_\beta:=\tr{e^{-\beta H}}$. Crucially, the reversibility guarantees that $\sigma_\beta$ is the fixed point of the evolution generated by $\mathcal{L}^{(\beta)}$. In particular, in \cite{Chen2025}, it was shown that this Lindbladian can be efficiently implemented with a quantum circuit. Details of those Lindbladians are shown in Section \ref{sec:gibbssampling} in the Methods section and in Appendix \ref{appendixGibbssampling}.

Here, we delve into the quantum algorithm proposed in~\cite{Chen2025}, and show its ability to efficiently prepare quantum states of interest. We investigate both the high-temperature and the low-temperature regimes. In the high-temperature regime, we show that above a certain constant threshold temperature,  the Lindbladians constructed in~\cite{Chen2025} efficiently converge to the GS for all Hamiltonians obeying a Lieb-Robinson bound, which includes local Hamiltonians on a lattice~\cite{nachtergaele2006lieb}. This then rigorously establishes the efficient preparation of high-temperature quantum Gibbs states on a quantum computer for any lattice dimension~\cite{brandao2019finite,kastoryano2016quantum,bardet2023rapid,capel2020modified}. 
Interestingly, our results also imply the efficient adiabatic preparation (starting with the state at inverse temperature $\beta=0$) of the associated purifications or thermofield doubles. The preparation of such purified states is of independent interest~\cite{cottrell_how_2019,sundar_proposal_2022,brown_quantum_2023}, and this result guarantees its efficiency for a general class of models.

For low temperatures, we show that implementing the Lindbladians of~\cite{Chen2025} for $\beta=\Omega(\text{poly}(n))$ gives rise to a universal model for quantum computing equivalent to the circuit model. To this end, we show that the Lindbladians corresponding to circuit-to-Hamiltonian mappings efficiently converge to a state with a polynomial overlap with the ground state corresponding to the output of the circuit. This state can then be leveraged to efficiently obtain the ground state through a series of local measurements. These results are in analogy with those of~\cite{chen2023local}, where the authors established results at polynomially larger temperatures and polynomial rounds of mid-circuit measurements, compared to one round at the end for our construction. Our construction also has a better dependency on the temperature, and does not require mid-circuit measurements. 

On a technical level, for high temperature, our analysis establishes a spectral gap for the Lindbladians defined in~\cite{Chen2025}. This is done by viewing the high-temperature generator as a perturbation of its infinite temperature counterpart, and employing results on quasi-local perturbations of gapped Hamiltonians~\cite{michalakis2013stability}. At low temperature, we also follow a perturbative approach, but now at $\beta=+\infty$. We are inspired by the classical results of~\cite{caputo2011zero} and the recent quantum results of~\cite{chen2023local} to show that the Lindbladians of~\cite{Chen2025} have the power to simulate polynomial-size quantum circuits  already at $\beta=\Omega(\text{poly}(n))$. Taken together, our findings show that the algorithm of~\cite{Chen2025} is a powerful tool to prepare non-trivial quantum many-body states, and to replicate the success of classical MCMC in the quantum realm.

\section{Results}

\subsection{Spectral gap for high-temperature quantum Gibbs sampler}\label{sec:highT}

We focus our attention on the physically-relevant class of geometrically local and short-range Hamiltonian $H$ on the lattice $\Lambda$. More precisely, such a Hamiltonian is defined through a function that maps any non-empty finite set $X\subset \Lambda$ to a self-adjoint element $h_X$ supported in $X$ with $\max_{X} \norm{h_X}_\infty \le h$. The Hamiltonian of the system in a region $Z \subset \Lambda$ is then defined as the sum over interactions included in $Z$:
\begin{align*}
		H_Z:=\sum_{X\subseteq Z}h_X\,.
\end{align*}
In addition, we assume that the Hamiltonian is such that no interaction $h_X$ has support on more than $k$ sites, and each site $u$ appears on at most $l$ non-zero operators $h_X$. We refer to such Hamiltonian as $(k,l)$-local. As is often done, we will make the implicit assumption that $H$ is one element of a family of Hamiltonians indexed by the lattice size $n$, in which case the constants $h,k$ and $l$ are assumed to be independent of $n$. We show that for any $\beta\leq\beta^*=\mathcal{O}((h  k l)^{-1})=\mathcal{O}(1)$, the spectral gap of $-\mathcal{L}^{(\beta)\dagger}$, i.e. the smallest nonzero eigenvalue, is constant.

\begin{theorem}\label{thmgaphighT1}
In the notations of the previous paragraph, for any $(k,l)$-local Hamiltonian $H$, there exists a constant $\beta^*>0$ independent of $n$ such that for $\beta \le \beta^*$, the spectral gap of $-{\mathcal{L}}^{(\beta)\dagger}$, is lower bounded by $\frac{1}{2\sqrt{2}e^{1/4}}$. 
\end{theorem}
By a standard argument recalled in Appendix \ref{sec:highTgibbsgap}, Theorem \ref{thmgaphighT1} implies the following guarantee on the convergence of the Gibbs sampler:
\begin{corollary}\label{cor:efficient_diss_prep}
For any $\epsilon>0$, the evolution generated by $\mathcal{L}^{(\beta)}$ gets $\epsilon$-close to $\sigma_\beta$ in polynomial time, i.e. for any initial state $\rho$,
\begin{align}\label{equ:mixing_gap}
\|e^{t\mathcal{L}^{(\beta)}}(\rho)-\sigma_\beta\|_1\le \epsilon\quad \text{ for all}\quad t=\Omega(\ln(1/\epsilon)+n)\,. 
\end{align}
Furthermore, we can prepare an $\epsilon$-approximation of $\sigma_\beta$ on a quantum computer with $\widetilde{\mathcal{O}}(n^2)$ Hamiltonian simulation time and $\widetilde{\mathcal{O}}(n^3)$ two qubit gates, where $\widetilde{\mathcal{O}}$ omits terms poly-logarithmic in $\epsilon^{-1}$ and $n$.
\end{corollary}

Given Equation \eqref{equ:mixing_gap}, the result on the efficient preparation on a quantum computer immediately follows from~\cite[Theorem 1.2]{Chen2025}, where the authors show how to efficiently simulate the Lindbladians of Equation \eqref{eq:generator1} on a quantum computer. The extra factor of $n$ in the $\widetilde{\mathcal{O}}(n^2)$ cost comes from the difference in normalization with respect to \cite[Theorem 1.2]{Chen2025}, and the additional $n$ in the number of gates is due to the spatial extension of the system. 
In the next Methods section, we explain the proof steps to obtain the bound on the spectral gap, which are further detailed in Appendix \ref{sec:highTgibbsgap}.

\subsection{Adiabatic preparation of the purified Gibbs state}

The proof of Theorem \ref{thmgaphighT1} involves a lower bound on the gap of the Lindbladian, done by mapping this Lindbladian to a Hamiltonian with the same gap, but in a doubled Hilbert space, which we define as $\widetilde{\mathcal{L}}^{(\beta)}$ (this is the \emph{discriminant} as defined in \cite{Chen2025}) . By construction, the Hamiltonian $\widetilde{\mathcal{L}}^{(\beta)}$ is quasi-local, and has the purified GS as a fixed point. This means that $\widetilde{\mathcal{L}}^{(\beta)}$ is a provable realization of Hamiltonians such as those suggested in e.g.  \cite{feiguin_hermitian_2013,cottrell_how_2019}.  

Additionally, the fact that we can control the gap starting from $\beta=0$ means that we can straightforwardly devise an adiabatic scheme to prepare the purified GS. The adiabatic path consists of lowering the temperature from $\beta=0$, so that our parametrized Hamiltonian is $\mathscr{H}(s)\equiv \widetilde{\mathcal{L}}_{s\beta}$ with $s \in [0,1]$, and $\beta \le \beta^*$. This way, by initially preparing the GS of $\mathscr{H}(0)$, which consists of a product of Bell pairs $\bigotimes_{i=1}^n \ket{\Phi^+}_i$, we can adiabatically sweep towards the purified Gibbs state (or \emph{thermofield double}) $\ket{\sqrt{\sigma_\beta}}$.

\begin{theorem}\label{thm:adiabaticmain}
In the notations of the previous paragraphs, and given $\beta^*$ as introduced in Theorem \ref{thmgaphighT1}, for any $\beta\in[0,\beta^*]$, the minimum time required to adiabatically prepare a state $\epsilon$-close to the purified Gibbs state at inverse temperature $\beta$ is 
\begin{align}\label{eq:adiabatictime}
T_{\operatorname{ad}}=  \mathcal{O} \left( \frac{  (\beta n )^3 }{\epsilon^2} \right)\,.
\end{align}
\end{theorem}

The proof follows from the lower bound of the gap from Theorem \ref{thmgaphighT1}, as well as standard estimates on the runtime of the adiabatic algorithm \cite{ambainis2006elementary}. To obtain the estimate of Equation \eqref{eq:adiabatictime}, in Appendix \ref{app:adiabatic} we give upper bounds to the first and second derivatives of the Hamiltonian $\mathscr{H}(s)$ with respect to the parameter $s$. If we simulate the adiabatic evolution as a circuit, the cost is $\widetilde{\mathcal{O}} \left(  T_{\operatorname{ad}} \times n \right)$ \cite{Haah_2021}.

The generator $\widetilde{\mathcal{L}}^{(\beta)}$ thus provides a polynomial-time scheme to prepare high temperature purified GS. Being able to prepare these  states is key in quantum simulations of certain models of entangled black holes \cite{brown_quantum_2023,nezami_quantum_2023}, and their generation allows for the measurement of OTOCs \cite{sundar_proposal_2022}. Because of this, many algorithms to prepare them have been devised \cite{wu_variational_2019,martyn_product_2019,chowdhury_variational_2020,Su_2021,failde_hamiltonian_2023}, and even implemented in small experiments \cite{zhu_generation_2020}. However, these are largely limited to variational approaches, without guarantees of efficiency or scalability. In contrast, the performance of our adiabatic algorithm at high temperatures is at most cubic in system size and quadratic in the inverse precision.

\subsection{Zero-temperature Gibbs sampling and universal quantum computing}\label{sec:uniQC}

Next, we consider the near-zero temperature regime, where we prove the existence of Gibbs samplers that reach states with inverse polynomial overlap with the ground states of $\mathsf{BQP}$-hard Hamiltonians at $\beta=\Omega(\text{poly}(n))$ in polynomial time. The result is contained in the following theorem.

\begin{theorem}\label{th:gibbsQP}
 The problem of approximating expectation values of local observables on the output of efficiently simulable low-temperature Gibbs samplers is $\mathsf{BQP}$-complete.
\end{theorem}

 We give the proof in the Methods section, with further details in Appendix \ref{GibbsQPsec}. The Gibbs samplers we consider are KMS Lindbladians that are nevertheless slightly different to those considered in Sec. \ref{sec:highT}, in that they involve Metropolis-like weights as considered in \cite{Chen2025}. 
 
 This result is reminiscent of the equivalence between decision problems decidable with polynomially-sized uniform families of quantum circuits and those decidable by means of unitary quasi-adiabatic evolutions ($\mathsf{BQP}=\mathsf{AdiabQP}$) as derived in the seminal paper \cite{aharonov2008adiabatic}, and later refined in \cite{oliveira2005complexity} by exhibiting a simpler circuit-to-Hamiltonian (CTH) construction with a family of $2$-local qubit Hamiltonians over a 2D grid (see also previous refinements in \cite{ambainis2006elementary,kempe2006complexity,oliveira2005complexity}).
 By analogy, Theorem \ref{th:gibbsQP} may allow us to define a dissipative variant of the well-known equivalence between the complexity class $\mathsf{BQP}$, and the class $\mathsf{AdiabQP}$ of those decidable by means of unitary quasi-adiabatic evolutions
\cite{aharonov2008adiabatic,ambainis2006elementary,kempe2006complexity,oliveira2005complexity}. To do this, one can potentially consider a class of problems solved by families of efficiently simulable Gibbs samplers \cite{ding2024efficientquantumgibbssamplers,scandi2025thermalizationopenmanybodysystems} containing the samplers used in Theorem \ref{th:gibbsQP}.

\medskip

The previous Theorem can be used to show a separation between classical and quantum computers for approximating expected values of fast mixing observables (see also \cite{kashyap2024accuracy} for related results): a local observable $O$ with $\|O\|_\infty\le 1$ is said to be fast mixing if the time-evolved observable $O_t:=e^{t\mathcal{L}^{(\sigma_E,\beta)\dagger}}(O)$ satisfies
\begin{align}
\big|\langle 0^n|O_{t}|0^n\rangle-\tr{\sigma_\beta O}\big|\le \epsilon
\end{align}
at $t=\operatorname{poly}(\beta,\frac{1}{\epsilon})$. From this definition, we obtain the following corollary.

\begin{corollary}\label{corrBQP}
There exists no $\operatorname{poly}(\beta,\epsilon^{-1})$-runtime classical algorithm outputing an $\epsilon$-additive approximation of the expected value $\tr{\sigma_\beta O}$ for any fast-mixing observable $O$ unless $\mathsf{BPP}=\mathsf{BQP}$.
\end{corollary}

\section{Comparison to previous work}\label{sec:comparison_previous}

There is an extensive literature on quantum Gibbs sampling on quantum computers. It includes numerous schemes with exponential runtimes for general Hamiltonians and temperatures \cite{poulin_sampling_2009,Temme_2011,chowdhury2016quantum,shtanko2021algorithms,Holmes_2022,zhang2023dissipative}, including a range of variational algorithms \cite{chowdhury_variational_2020,wang_variational_2021}. Others are aimed at mimicking the natural thermalization process of open systems \cite{chen2023fast,LiWang2023,Rall_2023}, for which a fast convergence is only guaranteed in restricted cases such as commuting Hamiltonians \cite{kastoryano2016quantum,bardet2023rapid}. In this regard, Corollary \ref{cor:efficient_diss_prep} together with the algorithm of \cite{Chen2025} give efficient (polynomial-time) preparation for general non-commuting models, even restricting to high temperatures. Note that Corollary \ref{cor:efficient_diss_prep} also works with exponentially decaying and non-commuting interactions, and likely also extends to certain types of polynomially decaying interactions. It also does not require any assumptions on the Gibbs state besides the temperature, unlike previous polynomial-time algorithms limited to 1D \cite{bilgin_preparing_2010} or commuting Hamiltonians \cite{GeMolnar2016}.  A result with a comparable degree of generality is that of~\cite{brandao2019finite}. However, there the authors need to assume conditions on the decay of correlations of the GS. Although these are believed to hold at high enough temperatures, they also require the application of gates acting on $\ln^D(n)$ qubits, where $D$ is the dimension of the lattice. Except for $D=1$, standard compilation results would then give a quasi-polynomial $e^{\mathcal{O}(\ln^D(n))}$ overhead to compile such gates, so the resulting algorithm is quasi-polynomial. Another algorithm with potential for efficiency at high temperatures is the adiabatic evolution proposed in the Appendix of \cite{GeMolnar2016} based on cluster expansions, but it is not clear whether it has a strictly polynomial runtime. 

It is important to mention that there are efficient classical algorithms for high temperature Gibbs states, with provably polynomial runtimes \cite{Helmuth2020,Mann2021,Haah_2024}. As such, it is currently unclear what is a range of temperatures for which Gibbs sampling may yield a provable quantum advantage. We note, however, that our bound on the mixing times is of interest despite it potentially being in a classically simulable regime. Since Lindbladians model the natural thermalization process \cite{Mozgunov_2020,Nathan_2020,Soret_2022,scandi2025thermalizationopenmanybodysystems}, establishing a fast mixing time is of physical interest beyond algorithmic or complexity considerations. This type of bound also allows us to establish that dissipative or mixed-state phase transitions do not happen in the high-temperature setting, helping us chart the map of dissipative and mixed-state phases \cite{Coser_2019,rakovszky2024definingstablephasesopen}. Additionally, being able to efficiently produce quantum Gibbs states and their purifications opens up the realizaton of quantum experiments at finite temperature (for instance, related to 2-point or 4-point dynamical correlation functions), and motivates the realization of quantum experiments at finite temperatures. 

After the completion of this work, we became aware of the results of~\cite{tang_efficient}, where the authors show that high-temperature Gibbs states can be prepared efficiently both on a quantum and classical computer. Their results have qualitative and quantitative differences to ours, which we now discuss in detail.
There, the authors showed that at high enough temperatures, Gibbs states can be approximated arbitrarily well by completely separable states that can be sampled efficiently. Thus, their result also gives both efficient classical and quantum algorithms to sample from such Gibbs states. Unfortunately, we currently cannot give bounds with explicit constants on the temperature ranges for which the Gibbs samplers are efficient. However, the critical temperature for our bounds scales inverse linearly with the locality of the Hamiltonian. In contrast, that of~\cite{tang_efficient} scales inverse quadratically. Thus, the Gibbs sampler is guaranteed to work at lower temperatures in the high locality limit. Furthermore, their classical algorithm to sample from the distribution has a runtime that scales like $\widetilde{\mathcal{O}}(n^{7+\log(\mathfrak{d})/\log(\beta_c/\beta)})$, where we have only included the dependency in $n$ and $\mathfrak{d}$ is the degree of the interaction graph. In contrast, our algorithm has no dependency on extra parameters of the problem and runs in time $\mathcal{O}(n^2)$ \,
, thus achieving a large polynomial speed-up compared to their methods. Their methods also apply to Hamiltonians beyond ours, such as $k$-local ones not on lattices, but our results only require Lieb-Robinson bounds, so they also extend to a range of long-range models \cite{TranLR2021} that their method does not cover. Furthermore, they also obtain as a corollary of their algorithm that these high temperature states are fully separable, a strong statement which does not follow from our results. It remains an interesting open question to understand if the temperature range covered by our results contains entangled Gibbs states.

The idea of solving $\mathsf{BQP}$-hard problems by engineering Lindbladians can be traced back to \cite{Verstraete2009,Krausetal}. However, the Lindbladians constructed there did not emulate the naturally occurring physical process of thermalization. In contrast to these early works, the authors of the recent paper \cite{chen2023local} developed a thermal gradient descent algorithm inspired by natural physical processes that generate low-energy states. This algorithm involves intertwining dissipation through Gibbs sampling, followed by measurements to confirm the reduction in energy at the output of the dissipation. While this alternative model of quantum computing is polynomially equivalent to the circuit model, the measurements at each step necessitate a possibly prohibitive overhead in practice. Moreover, the algorithm is guaranteed to work only at inverse temperatures that scale as $\beta=\mathcal{O}(T^{19})$ in the size $T$ of the circuit being emulated. 

In contrast, our approach is proved to work already at $\beta =\mathcal{O}(T^{11})$ for circuits of polynomial size, and does not require an active change of the dissipation depending on intermediate energy measurements: we simply need to let the system thermalize under a Lindbladian with a Hamiltonian corresponding to a CTH encoding for a polynomial amount of time. We then measure a series of local tests corresponding to the terms of another CTH mapping. If they all accept, which we show happens with high probability, we obtain the output of the circuit. Thus, repeating this procedure a few times suffices to reach the ground state.
Our proof is arguably much more straightforward than that of \cite{chen2023local} and, additionally, our procedure is also closer in spirit to the quantum adiabatic algorithm. Implementing the Lindbladians natively could provide a natural resilience to errors, due to the robustness of both local and quasi-local dissipative dynamics \cite{Cubitt2015,Borregaard2021,kim2017robust}. In contrast, we recall that the proof of a version of the threshold theorem for adiabatic quantum computing remains a major open problem to this date \cite{ChildsAdiabatic2001,AdiabaticQEC2013}.

\section{Methods}

\subsection{Details of the Lindbladian} \label{sec:gibbssampling}
 We now recall the generator of the quantum Gibbs sampler recently introduced in \cite{Chen2025}, with further details in Appendix \ref{appendixGibbssampling}. For $\beta>0$ and a local Hamiltonian $H$ over the lattice $\Lambda\equiv [0,L]^D$ with $n:=|\Lambda|$, we denote $\gamma(\omega):=\operatorname{exp}(-{(\beta\omega+1)^2}/{2})$ and we consider the Lindbladian
\begin{align}\label{eq:generator1}
&\mathcal{L}^{(\beta)}(\rho)=-i[B,\rho]+\sum_{a\in \Lambda,\alpha\in[3]}\int_{-\infty}^{\infty}\gamma(\omega)\mathcal{D}^{a,\alpha}_\omega(\rho)\,d\omega
\end{align}
where $\mathcal{D}^{a,\alpha}_\omega$ stands for the dissipative term corresponding to the jump operators
\begin{align}
A^{a,\alpha}(\omega):=\frac{1}{\sqrt{2\pi}}\int_{-\infty}^{\infty} e^{iHt}A^{a,\alpha} e^{-iH t}e^{-i\omega t}\,f(t)\,dt\quad 
\end{align}

with  $f(t):=\operatorname{exp}(-{t^2}/{\beta^2})\,\sqrt{\beta^{-1}\sqrt{2/\pi}}\,$.
Above, $A^{a,1}\equiv \sigma_x,\,A^{a,2}\equiv \sigma_y$ and $A^{a,3}\equiv \sigma_z$ denote the $1$-local Pauli matrices on site $a$. The coherent part is defined through the self-adjoint matrix
\begin{align}
&B=\beta^{-2}\sum_{a\in \Lambda,\alpha\in [3]}\, \int_{-\infty}^\infty b_1(t/\beta) e^{-i Ht}  
\int_{-\infty}^\infty b_2(t'/\beta) e^{iHt'}A^{a,\alpha} e^{-2i Ht'}A^{a,\alpha} e^{i Ht'}dt' e^{i Ht}dt
\end{align}
for some rapidly decaying functions $b_1,b_2$ with $\|b_1\|_{L_1}, \|b_2\|_{L_1}\le 1$ whose exact expression we recall in Appendix \ref{appendixGibbssampling}.

\medskip
\subsection{Spectral gap for high-temperature quantum Gibbs sampler}\label{sec:highT}
\paragraph{Mapping the Lindbladians to Hamiltonians:} for any $\beta>0$ we define as in~\cite{Chen2025}
\begin{align}\label{eq:Lbetatilde_main}
\widetilde{\mathcal{L}}^{(\beta)}(X):=\sigma_\beta^{-\frac{1}{4}}\mathcal{L}^{(\beta)}(\sigma_\beta^{\frac{1}{4}}X\sigma_\beta^{\frac{1}{4}} )\sigma_\beta^{-\frac{1}{4}}\equiv\sum_{a,\alpha}\widetilde{\mathcal{L}}_{a,\alpha}^{(\beta)}\,.
\end{align}
Letting $\ket{\sqrt{\sigma_\beta}}$ be the vectorization of $\sqrt{\sigma_\beta}$, we see that it is in the kernel of $\widetilde{\mathcal{L}}^{(\beta)}$. Furthermore, as $\widetilde{\mathcal{L}}^{(\beta)}$ is related to the original Lindbladian by a similarity transformation, they both share the same spectrum, and $\widetilde{\mathcal{L}}^{(\beta)}$ is self-adjoint with respect to the Hilbert-Schmidt inner product for all $\beta>0$, by KMS symmetry of $\mathcal{L}^{(\beta)\dagger}$. Thus, we can interpret $\widetilde{\mathcal{L}}^{(\beta)}$ as a Hamiltonian. Furthermore, the Hilbert space on which it is defined is independent of $\sigma_\beta$.

\medskip

\paragraph{Decomposing the generator:} we can write the generator as the following telescopic sum
\begin{align}\label{eq:localgen}
  \widetilde{\mathcal{L}}_{a,\alpha}^{(\beta)}=\widetilde{\mathcal{L}}_{a,\alpha}^{\beta,0}+\sum_{r=0}^{\infty}\widetilde{\mathcal{L}}_{a,\alpha}^{\beta,r+1}-\widetilde{\mathcal{L}}_{a,\alpha}^{\beta,r},
\end{align}
where for $r>0$ and a site $a$, $\widetilde{\mathcal{L}}_{a,\alpha}^{\beta,r+1}$ is defined as the generator for the state with the reduced Hamiltonian $H_{B_a(r)}$ associated to the ball $B_a(r)$ centered in site $a$ and of radius $r$. Thus, $\widetilde{\mathcal{L}}_{a,\alpha}^{\beta,r+1}-\widetilde{\mathcal{L}}_{a,\alpha}^{\beta,r}$ is strictly supported on $B_a(r+1)$. 
\medskip

\paragraph{The $\beta\to0$ limit:} by construction, for $\beta\to0$, the $\widetilde{\mathcal{L}}^{(\beta)}$ converges to the generator of the $1-$local depolarizing channel. When performing the mapping to Hamiltonians described in Equation \eqref{eq:Lbetatilde_main}, we thus obtain a Hamiltonian $\widetilde{H}_0$ that is frustration-free, gapped, and satisfies local topological quantum order (LTQO) as in~\cite{Bravyi_2010,michalakis2013stability}. It is straightforward to show these properties, so we leave the detailed derivations to the Appendix \ref{sec:gapped}. Showing the convergence to the depolarizing semigroup requires slightly more work, and we leave the details for Appendix \ref{sec:gapped}. In essence, the choice of the functions $f,b_1$ and $b_2$ ensures that they become delta functions in the limit $\beta\to0$. It is not difficult to see that at $\beta=0$, the matrices we integrate over are just proportional to the jump operators $A^{a,i}$, which are those of the depolarizing channel.

\paragraph{Quasi-locality of the generator:} the next crucial step is to bound the difference of generators of Eq. \eqref{eq:localgen} $\widetilde{\mathcal{E}}^{\beta,r}_{a,\alpha}:=\widetilde{\mathcal{L}}_{a,\alpha}^{\beta,r+1}-\widetilde{\mathcal{L}}_{a,\alpha}^{\beta,r}$ as a function of $r$ and $\beta$. By using Lieb-Robinson bounds, we find that
\begin{equation}
    \norm{\widetilde{\mathcal{E}}^{\beta,r}_{a,\alpha}}_{2 \rightarrow 2} = e^{-\Omega(r)},
\end{equation}
where the decay rate depends on $\beta$ and the Lieb-Robinson velocity. This means that we can see the finite $\beta$ contributions to $  \widetilde{\mathcal{L}}_{a,\alpha}^{(\beta)}$ as a quasi-local perturbation whose strength vanishes with lower $\beta $ (see Appendix \ref{sec:gapped} for further details). With this, we find that $  \widetilde{\mathcal{L}}_{a,\alpha}^{(\beta)}$ satisfies all the assumptions from \cite{michalakis2013stability}, from which we can guarantee the lower bound on the gap in Theorem \ref{thmgaphighT1}. The expression of $\beta^*$ then only depends on the Lieb-Robinson velocity of $H$, and the lattice dimension. The proof extends straightforwardly to systems with exponentially decaying interactions with similar Lieb-Robinson bounds \cite{HastingsKoma2006}, and we believe it should also extend to systems with polynomially decaying interactions \cite{TranLR2021,Review_Anthony_Chen_2023}.

\subsection{Zero-temperature Gibbs sampling and universal quantum computing}\label{sec:uniQC-M}

\paragraph{Proof of Theorem \ref{th:gibbsQP}}: Similar to the quantum adiabatic algorithm, our method consists of encoding the output states of a uniform, polynomially sized family of quantum circuits $C$ into the ground states of a uniform family of local Hamiltonians $H_C$ that can usually be decomposed into a clock part $H_{\textrm{clock}}$ and circuit part. The adiabatic theorem requires a path between $H_C$ and a trivial Hamiltonian with gap closing at most polynomially fast. Here, we instead need to show the rapid convergence 
towards states with enough overlap with the ground state of $H_C$ for a Gibbs sampling algorithm with the properly selected filter function $\gamma^M(\omega)$. 
The Hamiltonians in CTH constructions can be chosen to be frustration free and have ground state energy $0$. It then suffices to measure all energy terms and eventually obtain $0$ to ensure we have prepared the ground state. For this procedure to succeed with high probability, it suffices to produce an approximation of the GS with large overlap with the ground state.

It turns out that the Gaussian filter introduced in Equation \eqref{eq:generator1} does not work as well for low temperatures, since the corresponding Markov chain does not necessarily converge to the ground state. The Gaussian profile of the filter translates into the coefficient $\gamma(\omega)$ in Eq. \eqref{eq:generator1} also being Gaussian, which is a functional form that does not effectively favor transitions to lower energies. However, it was conjectured in \cite{Chen2025} that better mixing properties might be attained by means of linear combinations of Gaussian filters without loss of KMS symmetry. In particular, with such a linear combination we can generate the following filter 
\begin{align}\label{eq:metropo}
\gamma^M(\omega):=e^{-\beta \max\big(\omega+\frac{\beta\sigma_E^2}{2},0\big)}\,,
\end{align}
where $\sigma_E$ is a free parameter quantifying the energy resolution.  This is reminiscent of Metropolis sampling, and privileges transitions to lower energy levels.

Now, we consider the generator $\mathcal{L}^{(\sigma_E,\beta)}\equiv \mathcal{L}_H^{(\sigma_E,\beta)}$ constructed through $m$ jump operators $\{A^a\}_{a\in [m]}$ given by single-qubit Paulis as in Section \ref{sec:gibbssampling}, and Gaussian filter $\gamma$ replaced by the Metropolis filter $\gamma^M$ in Eq. \eqref{eq:metropo}. The associated Hamiltonian $H$ is on $n'=n+T$ qubits, with $T$ quadratic in the number of gates in the circuit, and has $M=\textrm{poly}(n+T)$ distinct eigenvalues. To show rapid convergence to the GS, we use several continuity bounds to connect the gap of $\mathcal{L}^{(\sigma_E,\beta)}$ (denoted as $\operatorname{gap}(\mathcal{L}^{(\sigma_E,\beta)\dagger})$) to that of a simpler generator we can control. 

First, in Lemma \ref{propperturbHH0} of Appendix \ref{GibbsQPsec}, we show the following perturbation bound for $\operatorname{gap}(\mathcal{L}^{(\sigma_E,\beta)\dagger})$ under changes in the Hamiltonian. Given $H=H_0+V$, a perturbed Hamiltonian with $\|V\|_\infty$ small enough, 
\begin{align}\label{H0toHbeta}
\operatorname{gap}({\mathcal{L}}^{(\sigma_E,\beta)\dagger}_{H})>\operatorname{gap}({\mathcal{L}}^{(\sigma_E,\beta)\dagger}_{H_0})-\delta''_{\beta,\sigma_E}\,,
\end{align}
where $\delta''_{\beta,\sigma_E}$ is proportional to $\norm{V}_\infty,m$ and to $\max\{\frac{\ln{\beta \sigma_E}}{\sigma_E},\frac{1}{\beta \sigma_E^2}\}$. In this case, we choose the full CTH for $V=H_C$, and only a classical ``clock" part of it as $H_0=H_{\operatorname{clock}}$, which has a significantly simpler spectrum.

We then control the gap of the generator corresponding to the Hamiltonian $H_{\operatorname{clock}}$ by connecting it to the gap in the limit of $\sigma_E \rightarrow 0$ and $\beta \rightarrow \infty$, which is done in two steps. For these to work, we exploit the additional structure of the  spectrum of $H_{\operatorname{clock}}$, namely that it is such that $M=\text{poly}(n')$ and that its smallest energy gap is $\Omega(1)$. In Lemma \ref{le:perturbsigmaE} of Appendix \ref{GibbsQPsec}, we prove that
\begin{align}
\operatorname{gap}(\mathcal{L}^{(\sigma_E,\beta)\dagger})\ge \operatorname{gap}(\mathcal{L}^{(0,\beta)\dagger})\big(1-\delta_{\beta,\sigma_E}\big)\,
\end{align}
with $\delta_{\beta ,\sigma_E}$ polynomial in $M,m,\sigma_E$ and exponentially decaying in $\beta$. This means that at low enough temperatures, we can estimate the gap by focusing on the limit $\sigma_E \rightarrow 0$. We then connect with the gap at zero temperature. In Lemma \ref{le:lemmainftytobeta} of Appendix \ref{GibbsQPsec} we show that 
\begin{align}
\operatorname{gap}(\mathcal{L}^{(0,\beta)\dagger})\ge \operatorname{gap}(\mathcal{L}^{(0,\infty)\dagger})-\delta_{\beta}'\,,
\end{align}
with $\delta'_\beta$ scaling polynomially with $M,m$ and decays exponentially in $\beta$. Thus, the gap of the generator $\mathcal{L}^{(0,\infty)\dagger}$ provides a good approximation to that of $\mathcal{L}^{(\sigma_E,\beta)\dagger}$ as long as $\beta=\Omega\left(\text{poly}(n')\right)$ and $\sigma_E=\mathcal{O}\left(1\right)$. 

The last part of the argument is a lower bound on $\operatorname{gap}(\mathcal{L}^{(0,\infty)\dagger})$, which is possible due to the simple structure of the spectrum of $H_{\operatorname{clock}}$ and of the generator $\mathcal{L}^{(0,\infty)\dagger}$ in the limit. To do this, we consider the explicit form of $\mathcal{L}^{(0,\infty)\dagger}$, which simplifies to a ``quasi-classical" Lindbladian in which the gap can be bounded with two key ingredients: Gershgorin's circle theorem and bounds on the gap of graph Laplacians, which appear due to the structure of the jump operators. More details can be found in Proposition \ref{propgap0inftyclock} of Appendix \ref{GibbsQPsec}. Putting everything together, we find that there are regimes of large $\beta$, constant $\sigma_E$ and small $\norm{V}_\infty$ within which the continuity bounds allows us to lower bound $\operatorname{gap}({\mathcal{L}}^{(\sigma_E,\beta)\dagger}_{H_C})$.

This argument shows that one can obtain a low temperature GS $\frac{e^{-\beta H_C}}{Z_\beta}$ in polynomial time. To prove that it has high overlap with the ground state at large enough $\beta$, we control the spectral gap of $H_C$ using once again the fact that it is a perturbation of a simpler Hamiltonian $H_{\operatorname{clock}}$. This means that picking the temperature such that $\beta=\Omega \left((n')^{11}\right)$, and adjusting the choice $\norm{V}_\infty=\mathcal{O}\left(1/(n')^5 \ln{(n')}\right)$ accordingly, considering the constraints from the continuity bounds, yields the desired polynomial overlap (See more details in Appendix \ref{thmCproof}). Although the resulting Lindbladian is not necessarily quasi-local like in the high-temperature case, the results of~\cite{Chen2025} show that it can still be simulated efficiently on a quantum computer.

\medskip
\paragraph{Proof of Corollary \ref{corrBQP}}

Assume such algorithm $\mathcal{A}_{\operatorname{cl}}$ exists and take our Gibbs sampler $\mathcal{L}^{(\sigma_E,\beta)}$ encoding a poly-sized circuit $U$. By the proof of Theorem \ref{th:gibbsQP} sketched in the above paragraph, for $\beta=\Theta((n')^{11})$ and $\sigma_E=\Theta(1)$, the generator $\mathcal{L}^{(\sigma_E,\beta)\dagger}$ has spectral gap $\lambda=\Omega((n')^{-2})$. Since $\|H\|_\infty\le \beta T=\mathcal{O}((n')^{12}) $, this implies that 
\begin{align}
\big|\langle 0^n|O_{t}|0^n\rangle-\tr{\sigma_\beta O}\big|\le \epsilon \|O\|_\infty
\end{align}
for $t=\operatorname{poly}((n')^{14},\log(\epsilon^{-1}))=\operatorname{poly}(\beta,\epsilon^{-1})$, by a standard proof of fast convergence from gap estimates (see Corollary \ref{coro1} of the Appendix for a precise statement). Hence any observable $O$ is fast mixing for the evolution considered. Now choosing $O$ as a $1$-qubit POVM element encoding the solution of a the circuit $U$, if $\mathcal{A}_{\operatorname{cl}}$ could return an $\epsilon$-additive approximation of the thermal average $\tr{\sigma_\beta O}$, it would solve the problem encoded in $U$, which would imply $\mathsf{BPP}=\mathsf{BQP}$. 


\section{Conclusion}
In this work, we have established the efficient dissipative preparation of high-temperature GS and their purifications, and showed that the class of Lindbladian evolutions of~\cite{Chen2025} is BQP complete at low temperatures. While the latter should be taken more as a proof-of-principle result, we hope that further work can improve the construction. Taken together, our results show that the Lindbladians of~\cite{Chen2025} have the full potential to replicate the success of MCMC for quantum many-body states. It would also be interesting to improve the results presented here by showing a modified log-Sobolev inequality~\cite{bardet2023rapid,bardet2024entropy,capel2020modified} for the generators instead of the spectral gap, as this would provide convergence not only in polynomial but even logarithmic time.
Furthermore, implementing the Lindbladian of~\cite{Chen2025} still requires sophisticated quantum circuits. It would be interesting to see if in our regimes it is possible to implement the generators more simply, perhaps even through a microscopic model of system-environment interactions such as the one for Davies maps \cite{Davies1974}. This would be particularly interesting for near-term computation, as it is known that dissipative preparation can be more robust than simulating unitary dynamics~\cite{Cubitt2015,Borregaard2021,kim2017robust}. More generally, it would be interesting to investigate further how errors affect these time evolutions, and for instance, whether a computational model motivated by Theorem \ref{th:gibbsQP} admits a threshold theorem.

\begin{acknowledgements}
The authors acknowledge useful discussions with Anthony Chen, and thank him for finding a gap in a previous version of Theorem \ref{th:gibbsQP}.
AMA acknowledges useful discussions with Tim Hsieh and David Gosset. CR would like to thank Simone Warzel for fruitful discussions on Hamiltonian gaps, and acknowledges financial support from the ANR project QTraj (ANR-20-CE40-0024-01) of the French National Research Agency (ANR). DSF acknowledges funding from the European Union under Grant Agreement 101080142 and the project EQUALITY and from the Novo Nordisk
Foundation (Grant No. NNF20OC0059939 Quantum for Life). AMA acknowledges support from the Spanish Agencia Estatal de Investigacion through the grants ``IFT Centro de Excelencia Severo Ochoa CEX2020-001007-S", ``Proyecto de Colaboraci\'on Internacional PCI2024-153448", ``PID2023-150847NA-I00"  and ``Ram\'on y Cajal RyC2021-031610-I'', financed by MCIN/AEI/10.13039/501100011033 and the European Union NextGenerationEU/PRTR. This project was
funded within the QuantERA II Programme that has re-
ceived funding from the EU’s H2020 research and inno-
vation programme under the GA No 101017733. Part of the manuscript was previously publiched as a STOC abstract \cite{STOC}.
\end{acknowledgements}

\bibliographystyle{apsrev4-2}
\bibliography{references}


\appendix

\section{Basic notation and setup}\label{sec:setup}

Given a finite set $V$, we denote by $\cH_V=\bigotimes_{v\in V}\cH_v$ the Hilbert space of $n=|V|$ qubits (\emph{i.e.}, $\cH_v\equiv \mathbb{C}^2$ for all $v\in V$) and by $\cB_V$ the algebra of linear operators on $\cH_V$. The trace on $\cB_V$ is denoted by $\tr{\cdot}$. $\mathcal{O}_V$ corresponds to the space of self-adjoint linear operators on $\cH_V$. $\mathcal{S}_V$ denotes the set of quantum states. For any subset $Z\subseteq V$, we use the standard notations $\mathcal{O}_Z, \mathcal{S}_Z\ldots$ for the corresponding objects defined on subsystem $Z$. For systems made of $n$ qudits, we sometimes also use the notations $[n]$ when refering to the set $\{1,\dots ,n\}$. Given a state $\rho\in\mathcal{S}_V$, we denote by $\rho_Z$ its marginal in the subsystem $Z$. For any $X\in\mathcal{O}_V$, we denote by $\|X\|_1:=\tr{|X|}$ its trace norm, by $\|X\|_\infty:=\sup_{\||\psi\rangle \|\le 1}\|X|\psi\rangle\|$ its operator norm, and by $\|X\|_2:=\sqrt{\tr{|X|^2}}$ its Hilbert-Schmidt norm. Following standard quantum information-theoretic terminology, operators acting on susystems $Z\subseteq V$ are often denoted by $O_Z$. The identity map is denoted by $I$. We denote by $O^\dagger$ the adjoint of an operator $O\in \cB_V$, and by $\Phi^\dagger$ the Hilbert-Schmidt dual of a superoperator on $\cB_V$. 

\subsection{Local Hamiltonians and Lieb-Robinson bounds}

Next, we consider a quantum spin system on a $D$-dimensional finite lattice $\Lambda\equiv [0,L]^D$, with $n:=|\Lambda|=(L+1)^D$, and a local Hamiltonian $H$ on the lattice, defined through a function that maps any non-empty finite set $X\subset \Lambda$ to a self-adjoint element $h_X$ supported in $X$ with $\max_{X} \norm{h_X}_\infty \le h$. The Hamiltonian of the system in a region $Z \subset \Lambda$ is then defined as the sum over interactions included in $Z$:
\begin{align*}
		H_Z:=\sum_{X\subseteq Z}h_X\,.
\end{align*}
In addition, we assume that the Hamiltonian is such that no interaction $h_X$ has support on more than $k$ sites, and each site $u$ appears on at most $l$ non-zero operators $h_X$. We refer to such Hamiltonian as $(k,l)$-local. As often, we will make the implicit assumption that $H$ is one element of a family of Hamiltonians indexed by the lattice size, in which case the constants $h,k$ and $l$ are assumed to be independent of $n$.

\medskip

For such Hamiltonians, given the so-called Lieb-Robinson velocity
$$J := \max_{u\in \Lambda} \sum_{ X\ni u} \vert X \vert \norm{h_X}_\infty=\mathcal{O}(hkl)\,,$$
 it is well-known that, for any operator $A^u$ supported on site $u \in \Lambda$ \cite{lieb1972finite,hastings2004lieb,nachtergaele2006lieb,HastingsKoma2006,Haah_2021}:
\begin{equation}\label{eq:LiebRobinson}
    \norm{e^{-iHt}A^u e^{iHt}-e^{-iH_{B_u(r)}t}A^u e^{iH_{B_u(r)}t}}_\infty \le \norm{A^u}_\infty \frac{(2J\vert t \vert)^r}{r!}\le \norm{A^u}_\infty \frac{(2h k l\vert t \vert)^r}{r!}\,,
\end{equation}
where $B_u(r)$ denotes the ball of radius $r$ with respect to the graph distance centered at $u$, and where the second bound results from $(k,l)$-locality.

\subsection{Quantum Gibbs sampling with Gaussian filters}\label{appendixGibbssampling}

We now recall the generator of the quantum Gibbs sampler introduced in \cite{chen2023efficient}: for $\beta>0$ and a Hamiltonian $H$, we denote $\gamma(\omega):=\operatorname{exp}(-\frac{(\beta\omega+1)^2}{2})$ and we consider the Lindbladian defined for all $\rho\in \mathcal{S}_\Lambda$ as
\begin{align}\label{eq:generator}
\mathcal{L}^{(\beta)}(\rho)&=-i[B,\rho]+{\sum_{a\in \Lambda}\sum_{\alpha\in[3]}\int_{-\infty}^{\infty} \gamma(\omega)\,\Big(A^{a,\alpha}(\omega)\rho A^{a,\alpha}(\omega)^\dagger-\frac{1}{2}\big\{A^{a,\alpha}(\omega)^\dagger A^{a,\alpha}(\omega),\rho\big\}\Big)d\omega}\\
&\equiv \sum_a\sum_\alpha\mathcal{L}^{(\beta)}_{a,\alpha}(\rho)\,,
\end{align}
with jump operators
\begin{align*}
A^{a,\alpha}(\omega):=\frac{1}{\sqrt{2\pi}}\int_{-\infty}^{\infty} e^{iHt}A^{a,\alpha} e^{-iH t}e^{-i\omega t}\,f(t)\,dt\qquad \text{ where }\quad f(t):=e^{-\frac{t^2}{\beta^2}}\,\sqrt{\beta^{-1}\sqrt{2/\pi}}\,.
\end{align*}
Above, $A^{a,1}\equiv \sigma_x,A^{a,2}\equiv \sigma_y$ and $A^{a,3}\equiv \sigma_z$ denote the $1$-local Pauli matrices on site $a$. The coherent part is defined through the self-adjoint matrix
\begin{align}\label{eq:coherentterm}
B:=\sum_{a\in \Lambda}\sum_{\alpha\in [3]}\, \int_{-\infty}^\infty b_1(t) e^{-i\beta Ht} \int_{-\infty}^\infty b_2(t') e^{i\beta Ht'}A^{a,\alpha} e^{-2i\beta Ht'}A^{a,\alpha} e^{i\beta Ht'}dt' e^{i\beta Ht}dt\equiv \sum_a\sum_\alpha B^\beta_{a,\alpha}\,,
\end{align}
for some smooth, rapidly decaying functions $b_1,b_2$ with $\|b_1\|_{L_1}, \|b_2\|_{L_1}\le 1$ defined as
\begin{align}\label{eq:b1b2}
&b_1(t):= 2\sqrt{\pi}\, e^{\frac{1}{8}}\, \left(\frac{1}{\operatorname{cosh}(2\pi t)}\ast_t\,\sin(-t)e^{-2t^2}\right)\\
&b_2(t):=\frac{1}{2\pi}\,\sqrt{\frac{1}{\pi}}\,\operatorname{exp}\big(-4t^2-2it\big)\,,
\end{align}
where the symbol $\ast_t$ indicates a convolution.
The Lindbladian constructed above has the main advantage of being both quasi-local and reversible (i.e. in ``detailed balance'') in the Heisenberg picture:
\begin{align*}
\langle X,\mathcal{L}^{(\beta)\dagger}(Y)\rangle_{\sigma_\beta}=\langle \mathcal{L}^{(\beta)\dagger}(X),Y\rangle_{\sigma_\beta}\,,
\end{align*}
where $\langle X,Y\rangle_{\sigma_\beta}:=\tr{X^\dagger{\sigma_\beta}^{\frac{1}{2}}Y{\sigma_\beta}^{\frac{1}{2}}}$ denotes the KMS inner product, and where the Gibbs state
\begin{align*}
\sigma_\beta=\frac{e^{-\beta H}}{\tr{e^{-\beta H}}}
\end{align*}
is the unique invariant state of the evolution generated by $\mathcal{L}^{(\beta)}$.

\medskip

In what follows, we will make use of a well-known mapping of KMS-symmetric Lindbladians into Hamiltonians: for any $\beta>0$, we define the super-operator
\begin{align}\label{eq:Lbetatilde}
\widetilde{\mathcal{L}}^{(\beta)}(X):=\sigma_\beta^{-\frac{1}{4}}\mathcal{L}^{(\beta)}(\sigma_\beta^{\frac{1}{4}}X\sigma_\beta^{\frac{1}{4}} )\sigma_\beta^{-\frac{1}{4}}\equiv\sum_{a,\alpha}\widetilde{\mathcal{L}}_{a,\alpha}^{(\beta)}\,,
\end{align}
where $X$ is an arbitrary operator acting as input to $\widetilde{\mathcal{L}}^{(\beta)}$.
Clearly, since $\sigma_\beta$ is a steady state of the evolution, and unique due to the choice of the jumps $A^{a,\alpha}$, $\operatorname{Ker}(\widetilde{\mathcal{L}}^{(\beta)})=\sqrt{\sigma_\beta}\equiv \vert \sqrt{\sigma_\beta} \rangle$. Moreover, $\widetilde{\mathcal{L}}^{(\beta)}$ is self-adjoint with respect to the Hilbert-Schmidt inner product for all $\beta>0$, by the KMS symmetry of $\mathcal{L}^{(\beta)\dagger}$. Our strategy in what follows will be to show that $\widetilde{\mathcal{L}}^{(\beta)}$ can be expressed as a quasi-local perturbation of a certain $1$-local gapped Hamiltonian (the limit $\beta \rightarrow 0$) and to study the stability of the latter against such perturbations.

\section{Constant gap at high temperature}\label{sec:highTgibbsgap}
In this appendix, we show our first main result, namely that the Lindbladians $\mathcal{L}^{(\beta)\dagger}$ are gapped at high enough temperature:

\begin{theorem}\label{thmgaphighT}
In the notations of Appendix \ref{sec:setup}, for any $(k,l)$-local Hamiltonian $H$, there exists a constant $\beta^*>0$ independent of $n$ such that for $\beta < \beta^*$, the spectral gap of $\mathcal{L}^{(\beta)\dagger}$, is lower bounded by $\frac{1}{2\sqrt{2}e^{1/4}}$.
\end{theorem}

\noindent In our setting, the existence of a constant gap leads to the following standard consequence on the fast convergence of the sampler towards its unique fixed point:

\begin{corollary}\label{coro1}
In the setting of Theorem \ref{thmgaphighT}, for any $\beta<\beta^*$ and $\epsilon>0$, the evolution generated by $\mathcal{L}^{(\beta)}$ gets $\epsilon$-close to $\sigma_\beta$ in polynomial time. More precisely, for any $\rho\in \mathcal{S}_{\Lambda}$,
\begin{align*}
\|e^{t\mathcal{L}^{(\beta)}}(\rho)-\sigma_\beta\|_1\le \epsilon\qquad \text{ for all}\qquad t=\Omega(\ln(1/\epsilon)+n)\,. 
\end{align*}
\end{corollary}

\begin{proof}
The corollary above is a consequence of a standard result, whose proof we recall for the reader's convenience. By equivalence of s.a. in the noncommutative weighted $L_2$ norm and the KMS symmetry of $\mathcal{L}^{(\beta)\dagger}$, the trace distance can be controlled as follows: given $X:=\sigma_\beta^{-{1}/{2}}\rho \sigma_\beta^{-{1}/{2}}$,
\begin{align*}
\|e^{t\mathcal{L}^{(\beta)}}(\rho)-\sigma_\beta\|_1\le \|e^{t\mathcal{L}^{(\beta)\dagger}}(X)-I\|_{\sigma_\beta}&\le e^{-\frac{t}{2\sqrt{2}e^{1/4}}}\|X-I\|_{\sigma_\beta}\\
&\le 2\|\sigma_{\beta}^{-1}\|_\infty e^{-\frac{t}{2\sqrt{2}e^{1/4}}}\\
&\le 2^{|\Lambda|+1}\,e^{2\beta^*\|H\|_\infty-\frac{t}{2\sqrt{2}e^{1/4}}}\\
&\le 2^{|\Lambda|+1}\,e^{2\beta^*lhn-\frac{t}{2\sqrt{2}e^{1/4}}}\,,
\end{align*}
where $\norm{Y}_{\sigma_\beta}=\norm{\sigma_{\beta}^{1/4}Y\sigma_{\beta}^{1/4}}_2$.
The result follows by setting the upper bound above to $\epsilon$ and solving for $t$. 

\end{proof}

The rest of this appendix is devoted to proving Theorem \ref{thmgaphighT} and to applying it to the adiabatic algorithm. In Section \ref{stabilityFFgapped}, we recall some well-known gap stability results for the class of frustration-free gapped Hamiltonians satisfying a condition known as local topological quantum order. In Section \ref{sec:gapped}, we prove that the maps $\widetilde{\mathcal{L}}^{(\beta)}$ satisfy the conditions for stability of the gap recalled in Section \ref{stabilityFFgapped}. Finally, we analyze the efficiency of the adiabatic evolution associated to the generator $\widetilde{\mathcal{L}}^{(\beta)}$ in Section \ref{app:adiabatic}.

\subsection{Stability of frustration-free gapped Hamiltonians}\label{stabilityFFgapped}

As previously mentioned, our main result at high temperatures builds on a well-known result \cite{Bravyi_2010,michalakis2013stability} about the stability of certain gapped Hamitonians under small quasi-local perturbations: let $\widetilde{H}_0$ be a Hamiltonian defined on a $D$-dimensional lattice $\Lambda$ of side-length $L+1$ with corresponding Hilbert space $\widetilde{\mathcal{H}}_\Lambda=\bigotimes_{j\in\Lambda}\widetilde{\mathcal{H}}_j$, which satisfies the following requirements:
\begin{itemize}
\item[1.] It can be written as a sum of geometrically local terms $\widetilde{H}_0=\sum_{u\in\Lambda} \widetilde{Q}_u$, where each interaction $\widetilde{Q}_u$ is positive semidefinite and acts non-trivially on a Hilbert space supported on the ball $B_u(1)$ of radius $1$ around site $u$; 
\item[2.] It satisfies periodic boundary conditions;
\item[3.] For $\widetilde{P}_0$ the projector onto the groundspace of $\widetilde{H}_0$, we have $\widetilde{H}_0\widetilde{P}_0=0$, i.e.~the ground state energy is fixed to $0$. Moreover, $\widetilde{Q}_u\widetilde{P}_0=0$ for all $u \in \Lambda$ (frustration-freeness);
\item[4.] $\widetilde{H}_0$ has a gap $\gamma_L\ge \gamma>0$ above the groundstate subspace, for all $L\ge 2$, where $\gamma$ is a constant independent of the system size $L$.
\end{itemize}

Let us set $L^*$ acting as a cut-off parameter for topological order, which corresponds to e.g. the code distance in a stabilizer Hamiltonian.
Given a ball $A\equiv B_u(r)$ with $r\le L^*\le L$ and an observable $O_{Z}$ supported on $Z$, we set $Z(\ell):=B_u(r+\ell)$ and denote 
\begin{align*}
c_\ell(O_Z):=\frac{\tr{\widetilde{P}_{Z(\ell)}O_Z}}{\tr{\widetilde{P}_{Z(\ell)}}}\,,\qquad c(O_Z):=\frac{\tr{\widetilde{P}_0 O_Z}}{\tr{\widetilde{P}_0}}\,,
\end{align*}
where $\widetilde{P}_B(\epsilon)$ is the projection onto the subspace of eigenstates of $\widetilde{H}_B=\sum_{B_v(1)\subset B}\widetilde{Q}_v$ with energy at most $\epsilon\ge 0$ and we set $\widetilde{P}_B(0)\equiv \widetilde{P}_B$. The Hamiltonian $\widetilde{H}_0$ satisfies local topological order if, for each fixed $1\le \ell\le L-r$,
\begin{align*}
\Big\|P_{Z(\ell)}O_Z P_{Z(\ell)}-c_\ell(O_Z)P_{Z(\ell)}\Big\|\le \|O_Z\|\,\Delta_0(\ell)\,,
\end{align*}
where $\Delta_0(\ell)$ is a decaying function of $\ell$. Finally, we say that $\widetilde{H}_0$ is locally gapped with gap $\gamma(r)>0$ iff for each $u\in\Lambda$ and $r\ge 0$, $\widetilde{P}_{B_u(r)}(\gamma(r))=\widetilde{P}_{B_u(r)}$. When $\gamma(r)$ decays at most polynomially in $r$, we say that $\widetilde{H}_0$ satisfies the local gap condition. 

\medskip

Next, we consider a perturbation $\widetilde{V}$ with strength $\widetilde{J}$ and decay rate $f(r)$, with $f(r)\le 1$, $r\ge 0$, that is, $\widetilde{V}:=\sum_{u\in\Lambda}\sum_{r=0}^L\widetilde{V}_u(r)$, such that $\widetilde{V}_u(r)$ has support on $B_u(r)$ and satisfies $\|\widetilde{V}_u(r)\|\le \widetilde{J}f(r)$ for some rapidly decaying function $f(r)$ with $f(r)\le 1$. 

\begin{theorem}[Stability of gap for frustration-free gapped Hamiltonians \cite{michalakis2013stability}]\label{thm:stability} Let $\widetilde{H}_0$ be a frustration-free Hamiltonian as defined above with spectral gap $\gamma$ satisfying the local topological quantum order and local-gap conditions. For a perturbation $\widetilde{V}$ of strength $\widetilde{J}$ and decay rate $f(r)$ there exist constants $\widetilde{J}_0>0$ and $L_0\ge 2$ such that for $\widetilde{J}\le \widetilde{J}_0$ and $L\ge L_0$, the spectral gap of $\widetilde{H}_0+\widetilde{V}$ is bounded from below by $\gamma/2$.
\end{theorem}

The constants $\widetilde{J}_0$ and $\widetilde{L}_0$ depend on the unperturbed Hamiltonian $\widetilde{H}_0$ and the form of the decay $f(r)$. More specifically, they depend on the Lieb-Robinson velocity of $\widetilde{H}_0+\widetilde{V}$, the global $\gamma$ and local $\gamma(r)$ gaps, and the volume of a $D$-dimensional sphere if radius $1$ (so that $\widetilde{J}_0$ decreases for higher dimensional lattices).  Their specific form can be determined from \cite{michalakis2013stability}.

\subsection{Quantum Gibbs sampler as a gapped Hamiltonian: proof of Theorem \ref{thmgaphighT}}\label{sec:gapped}

We now show that the Hamiltonian $\widetilde{\mathcal{L}}^{(\beta)}$ defined in \eqref{eq:Lbetatilde} on the Hilbert space of Hilbert-Schmidt operators over $(\mathbb{C}^2)^{\otimes |\Lambda|}$ can be seen as the perturbation $\widetilde{H}_0+\widetilde{V}$ of a Hamiltonian $\widetilde{H}_0$ (corresponding to $\beta \rightarrow 0$) that satisfies the conditions of Theorem 
\ref{thm:stability} for $\beta$ small enough, from which we infer that it has a finite spectral gap. To do this, we study the convergence of $\widetilde{\mathcal{L}}^{(\beta)}$ as $\beta\to 0$ in norm $ \|.\|_{2\to 2}$. We first consider the following telescopic sum decomposition of $\widetilde{\mathcal{L}}_{a,\alpha}^{(\beta)}$:
	\begin{align}
		\widetilde{\mathcal{L}}_{a,\alpha}^{(\beta)}=\widetilde{\mathcal{L}}_{a,\alpha}^{\beta,0}+\sum_{r=0}^{\infty} \left(\widetilde{\mathcal{L}}_{a,\alpha}^{\beta,r+1}-\widetilde{\mathcal{L}}_{a,\alpha}^{\beta,r} \right),
	\end{align}
where for any $r\in\mathbb{N}$, $\widetilde{\mathcal{L}}_{a,\alpha}^{\beta,r}$ is defined with respect to the jump operators $A^{a,\alpha}_{r}(\omega)$ as well as the state $\sigma_r$ expressed analogously to $A^{a,\alpha}(\omega)$ and $\sigma_\beta$ up to replacing $H$ with the reduced Hamiltonian $H_{B_a(r)}$.  Therefore the perturbations $\widetilde{\mathcal{E}}_{a,\alpha}^{\beta,r}:=\widetilde{\mathcal{L}}_{a,\alpha}^{\beta,r+1}-\widetilde{\mathcal{L}}_{a,\alpha}^{\beta,r}$ are supported on regions $B_a(r+1)$. The $r=0$ case corresponds to the possible $1$-local, single qubit, terms in the Hamiltonian. 

\medskip

Considering the definition of the Lindbladian in Section \ref{sec:setup}, and the respective Hamiltonian in Equation \eqref{eq:Lbetatilde}, we can decompose the maps $\widetilde{\mathcal{L}}^{\beta,r}_{a,\alpha}$ into
	\begin{align*}
		\widetilde{\mathcal{L}}^{\beta,r}_{a,\alpha}:=\widetilde{\mathcal{C}}^{\beta,r}_{a,\alpha}+\widetilde{\Psi}^{\beta,r}_{a,\alpha}+\widetilde{\mathcal{D}}^{\beta,r}_{a,\alpha}\,,
	\end{align*}
	where 
	\begin{align*}
		&\widetilde{\mathcal{C}}^{\beta,r}_{a,\alpha}(X):=-i \Big[\Delta_{{\beta,r}}^{-\frac{1}{4}}(B_{a,\alpha}^{\beta,r})X-X\Delta_{\beta,r}^{\frac{1}{4}}(B_{a,\alpha}^{\beta,r})\Big]\,,\\
		&\widetilde{\mathcal{D}}^{\beta,r}_{a,\alpha}(X):=-\frac{1}{2}\int \Big[\Delta_{\beta,r}^{-\frac{1}{4}}\big(A^{a,\alpha}_{r}(\omega)^\dagger A^{a,\alpha}_{r}(\omega)\big)X+X\Delta_{\beta,r}^{\frac{1}{4}}\big(A^{a,\alpha}_{r}(\omega)^\dagger A^{a,\alpha}_{r}(\omega)\big)\Big]\gamma(\omega)d\omega\, ,\\
  &\widetilde{\Psi}^{\beta,r}_{a,\alpha}(X):= \int\gamma(\omega)\,\Delta_{\beta,r}^{-\frac{1}{4}}(A^{a,\alpha}_{r}(\omega))X\Delta_{\beta,r}^{\frac{1}{4}}(A^{a,\alpha}_r(\omega)^\dagger)\,d\omega\,.
	\end{align*}
	Above, we denoted by $\Delta_{\beta,r}^z(X)=\sigma_r^z X\sigma_r^{-z}$ the $z$-th power of the modular operator corresponding to the state $\sigma_r \propto {\operatorname{exp}({-\beta {H_{B_a(r)}} })}$, and the terms $B_{a,\alpha}^{\beta,r}$ are defined from Equation \eqref{eq:coherentterm}.  We aim at controlling the norm $\|\widetilde{\mathcal{E}}_{a,\alpha}^{\beta,r}\|_{2\to 2}$, showing that it decays sufficiently fast with $r$. 
	From Proposition A.1 and Corollary III.5 in \cite{chen2023efficient} we can see that
	\begin{align}\label{eq:Nrsumcode}
\widetilde{\mathcal{C}}^{\beta,r}_{a,\alpha}(X)+\widetilde{\mathcal{D}}^{\beta,r}_{a,\alpha}(X) = \frac{1}{2}\left(N^{\beta,r}_{a,\alpha} X + X N^{\beta,r}_{a,\alpha} \right),
	\end{align}
	where we define
	\begin{align}\label{eq:Nr}
		N^{\beta,r}_{a,\alpha}= \int_{-\infty}^\infty n_1(t) e^{-i\beta t H_{B_a(r)}}\left(\int_{-\infty}^\infty n_2(t')A_r^{a,\alpha \, \dagger}(\beta t') A_r^{a,\alpha}(-\beta t') dt' \right)e^{i\beta t H_{B_a(r)}} dt,
	\end{align}
with $A_r^{a,\alpha}(t)=e^{it H_{B_a(r)}}A^{a,\alpha}e^{-it H_{B_a(r)}}$ and
\begin{align} \label{eq:n1}
	&n_1(t):= \frac{\sqrt{\pi}}{2}\, \left(\frac{1}{\operatorname{cosh}(2\pi t)}\ast_t\, e^{-2t^2}\right)\\ \label{eq:n2}
&n_2(t):=8 b_2(t)=\frac{4}{\pi^{\frac{3}{2}}}\,\operatorname{exp}\big(-4t^2-2it\big)\,.
\end{align}
Notice that these are slightly different to the functions in \eqref{eq:b1b2} above due to the mapping in Equation \eqref{eq:Lbetatilde}.  We can thus bound
\begin{align}
	\norm{	\widetilde{\mathcal{C}}^{\beta,r+1}_{a,\alpha}(X)+\widetilde{\mathcal{D}}^{\beta,r+1}_{a,\alpha}(X) -	\widetilde{\mathcal{C}}^{\beta,r}_{a,\alpha}(X)-\widetilde{\mathcal{D}}^{\beta,r}_{a,\alpha}(X) }_2 \le  \norm{X}_2 \norm{	N^{\beta,r+1}_{a,\alpha}-	N^{\beta,r}_{a,\alpha}}_
	{\infty},
\end{align}
which is such that
\begin{align} \label{eq:Nbound}
	\norm{	N^{\beta,r+1}_{a,\alpha}-	N^{\beta,r}_{a,\alpha}}_
{\infty} \le  \int_{-\infty}^\infty \int_{-\infty}^\infty n_1(t) \vert n_2(t') \vert \norm{\mathcal{N}(t,t')}_{\infty} dt' dt,
\end{align}
where
$\mathcal{N}(t,t')= A_{r+1}^{a,\alpha \, \dagger}(\beta (t'-t)) A_{r+1}^{a,\alpha}(\beta( -t'-t)) - A_r^{a,\alpha \, \dagger}(\beta (t'-t)) A_r^{a,\alpha}(\beta(-t- t'))$.

Now let us divide the integrals in \eqref{eq:Nbound} as 
\begin{align*}
	\norm{	N^{\beta,r+1}_{a,\alpha}-	N^{\beta,r}_{a,\alpha}}_
{\infty} &\le  \int_{-t_0}^{t_0} \int_{-t_0}^{t_0} n_1(t) \vert n_2(t') \vert \norm{\mathcal{N}(t,t')}_{\infty} dt' dt \\ &+ 2\left( \int_{t_0}^\infty \int_{-\infty}^\infty + \int_{-\infty}^\infty \int_{t_0}^\infty + \int_{t_0}^\infty \int_{t_0}^\infty\right) n_1(t) \vert n_2(t') \vert \norm{\mathcal{N}(t,t')}_{\infty} dt' dt\,.
\end{align*}
For the first term (short times), we use the Lieb-Robinson bound recalled in Equation \eqref{eq:LiebRobinson}. Since $\|A^{a,\alpha}\|_\infty=1$, we have
\begin{align}
\norm{\mathcal{N}(t,t')}_{\infty}&=\norm{  A_{r+1}^{a,\alpha \, \dagger}(\beta (t'-t)) A_{r+1}^{a,\alpha}(\beta( -t'-t)) - A_r^{a,\alpha \, \dagger}(\beta (t'-t)) A_r^{a,\alpha}(\beta(-t- t'))}_\infty\nonumber \\
&\le 8\frac{\left(\beta J (\vert t \vert +\vert t' \vert) \right)^r}{r!}  ,\nonumber
\end{align}
which follows from the bound in Eq. \eqref{eq:LiebRobinson} and a repeated use of the triangle inequality. This means that
\begin{align}
    \int_{-t_0}^{t_0} \int_{-t_0}^{t_0} n_1(t) \vert n_2(t') \vert \norm{\mathcal{N}(t,t')}_{\infty} dt' dt & \le 8 \frac{\left(2 \beta J t_0 \right)^r}{r!} \int_{-\infty}^{\infty} \int_{-\infty}^{\infty} n_1(t) \vert n_2(t') \vert dt' dt 
    \\ & \le  2 \sqrt{2} \frac{\left(2 \beta J t_0 \right)^r}{r!} \le \frac{4}{\sqrt{2r}}  \left(\frac{2e\beta J t_0}{r} \right)^r.
\end{align}
Otherwise, using the triangle inequality, we can also bound
\begin{align}
\norm{\mathcal{N}(t,t')}_{\infty}&=\norm{  A_{r+1}^{a,\alpha \, \dagger}(\beta (t'-t)) A_{r+1}^{a,\alpha}(\beta( -t'-t)) - A_r^{a,\alpha \, \dagger}(\beta (t'-t)) A_r^{a,\alpha}(\beta(-t- t'))}_\infty\nonumber \\
&\le 8 \beta (\vert t \vert+\vert t \vert) \norm{H_{B_a(r)}-H_{B_a(r+1)}}_\infty \le 8 \beta (\vert t \vert+\vert t' \vert) r^D
\end{align}
so that we obtain 
\begin{align*}
	& 2\left( \int_{t_0}^\infty \int_{-\infty}^\infty + \int_{-\infty}^\infty \int_{t_0}^\infty + \int_{t_0}^\infty \int_{t_0}^\infty\right) n_1(t) \vert n_2(t') \vert \norm{\mathcal{N}(t,t')}_{\infty} dt' dt 
 \\
 &\le 32 (3(r+1))^D \beta h \left( \int_{t_0}^\infty \int_{-\infty}^\infty + \int_{-\infty}^\infty \int_{t_0}^\infty + \int_{t_0}^\infty \int_{t_0}^\infty\right) (\vert t \vert+\vert t' \vert) n_1(t) \vert n_2(t') \vert dt' dt
 \\
 & \le 32 (3(r+1))^D \beta h  \left( 2 e^{-2 t_0^2} + (11+48 t_0) e^{-2\pi t_0} \right),
\end{align*}
where the second line follows from a trivial bound to the volume of a $D-$sphere and the third line follows from algebraic manipulations of the integrals.
By fixing $t_0=\frac{r g(\beta J)}{2   \beta J }$ with $g(x)=\frac{\sqrt{x}}{1+\sqrt{x}}$, we thus obtain
\begin{align}
	\norm{	N^{\beta,r+1}_{a,\alpha}-	N^{\beta,r}_{a,\alpha}}_
{\infty} &\le 4  \frac{g(\beta J)^{r }}{\sqrt{2 r}} + \beta h \times e^{-\Omega \left (r\frac{ g(\beta J)} {\beta J} \right )},
\end{align}
which is an exponentially decaying interaction with the leading order term decaying with rate $\frac{ g(\beta J)} {\beta J}$.

Now we bound the transition part $\widetilde{\Psi}^{\beta,r}_{a,\alpha}(X)$. As can be seen in \cite{chen2023efficient} (Corollary A.3), we can write 
\begin{align*}
	\widetilde{\Psi}^{\beta,r}_{a,\alpha}(X)=\int_{-\infty}^\infty \int_{-\infty}^\infty h_-(t_-) h_+(t_+)A_{r}^{a,\alpha }(\beta (t_+-t_-))X A_{r}^{a,\alpha}(\beta (-t_+-t_-))^\dagger dt_+d t_-,
\end{align*}
where
\begin{align*}
	h_+(t)=e^{-1/4-4 t^2} \quad, \quad 	h_-(t)=\frac{1}{\pi}e^{-2t^2}.
\end{align*}
Considering that, again by the Lieb-Robinson bound \eqref{eq:LiebRobinson},
\begin{align*}
&\norm{	A_{r+1}^{a,\alpha }(\beta (t_+-t_-))X A_{r+1}^{a,\alpha }(\beta (-t_+-t_-))^\dagger -	A_{r}^{a,\alpha }(\beta (t_+-t_-))X A_{r}^{a,\alpha }(\beta (-t_+-t_-))^\dagger}_2\\
&~~~~~~~~~~~~~~~~~~~~~~~~~~~~~~~~~~~~~~~~~~~~~~~~~~~~~~~~~~~~~~~~~~~~~~~~~~~~~~~~~~~~~~~~\le \norm{X}_2  8\frac{\left( \beta J\, (\vert t_+ \vert +\vert t_- \vert) \right)^r}{r!},
\end{align*}
we can simply bound, by again estimating the integral over $t_+,t_-$ ,
\begin{align*}	\norm{\widetilde{\Psi}^{\beta,r+1}_{a,\alpha}(X)-	\widetilde{\Psi}^{\beta,r}_{a,\alpha}(X)}_2 \le 4 \norm{X}_2 \frac{\left(\frac{\sqrt{3}\beta J}{4}\right)^r}{\Gamma(1+\frac{r}{2})}\,.
\end{align*}
Putting all the bounds together, we can write,
\begin{align} \label{eq:decaywithr}
\|\widetilde{\mathcal{E}}^{\beta, r}_{a,\alpha}\|_{2\to 2}\le 4 \frac{g(\beta J)^{-r }}{\sqrt{2 r}} + \beta h \times e^{-\Omega \left (r\frac{ g(\beta J)} {\beta J} \right )} + 4 \frac{\left(\frac{\sqrt{3}\beta J}{4}\right)^r}{\Gamma(1+\frac{r}{2})}\,,
\end{align}
which vanishes as $\beta \rightarrow 0$.
The next step is to determine the point we perturb around $\widetilde{\mathcal{L}}_{a,\alpha}^{0,0}$. Notice that $\lim_{\beta \rightarrow 0} 	N^{\beta,r}_{a,\alpha} = \lambda {I}$ for some constant $\lambda$, and so $\lim_{\beta \rightarrow 0}	\widetilde{\mathcal{C}}^{\beta,r}_{a,\alpha}(X)+\widetilde{\mathcal{D}}^{\beta,r}_{a,\alpha}(X)=\lambda X$, which is the identity channel up to $\lambda$. Additionally, 
\begin{align*}
		\lim_{\beta \rightarrow 0} \widetilde{\Psi}^{\beta,0}_{a,\alpha}(X)=\frac{1}{2 \sqrt{2}e^{1/4}} A^{a,\alpha }X A^{a,\alpha \, \dagger}.
\end{align*}
This means that $\lambda=-\frac{1}{\sqrt{2}e^{1/4}}$.
When summing over all Pauli matrices $A^{a,\alpha }$, we obtain the generator of the local depolarizing channel. Finally, we bound the remaining contribution, which is the norm difference $\|\widetilde{\mathcal{L}}^{\beta,0}_{a,\alpha}-\widetilde{\mathcal{L}}^{0,0}_{a,\alpha}\|_{2\to 2}$ between the $1$-qubit generator at $\beta$ and that at $\beta=0$. Note that
\begin{align}\label{eq:NId}
	\norm{	N^{\beta,0}_{a,\alpha}-\lambda {I}}_\infty &\le  \int_{-\infty}^\infty n_1(t)    \int_{-\infty}^\infty  \vert n_2(t') \vert \norm{A_0^{a,\alpha \, \dagger}(\beta t') A_0^{a,\alpha}(-\beta t')- {I}}_\infty dt'dt \\ \nonumber & \le
 \int_{-\infty}^\infty n_1(t)    \int_{-\infty}^\infty  \vert n_2(t') \vert \vert \beta t' \vert 2 \norm{   [H,A_0^{a,\alpha \, \dagger}] }_\infty dt' dt \\ \nonumber &
  \le
 (4 \beta hl) \int_{-\infty}^\infty n_1(t)   \int_{-\infty}^\infty  \vert n_2(t') \vert  \vert t' \vert  dt' dt 
  \\ \nonumber  & = \frac{4 \beta hl}{\pi^{\frac{3}{2}}} \times  \int_{-\infty}^\infty n_1(t)  dt = 2 \beta hl , 
\end{align}
where we bounded $\norm{A_0^{a,\alpha \, \dagger}(t)-A_0^{a,\alpha \, \dagger}}_\infty\le \vert t \vert \norm{ [H,A_0^{a,\alpha \, \dagger}]}_\infty \le 2 h l $ . For the last part, we compare $\widetilde{\Psi}^{\beta,0}_{a,\alpha}(X)$ with the depolarizing generator $\widetilde{\Psi}^{0,0}_{a,\alpha}(X)=\frac{1}{2 \sqrt{2}e^{1/4}} A^{a,\alpha }X A^{a,\alpha \, \dagger}$ in a similar way as Equation \eqref{eq:NId}, so that

\begin{align*}
	&\norm{\widetilde{\Psi}^{\beta,0}_{a,\alpha}(X)-\widetilde{\Psi}^{0,0}_{a,\alpha}(X)}_2 \\
 &~~~\le \int_{-\infty}^\infty \int_{-\infty}^\infty h_-(t_-) h_+(t_+)\norm{ A_{0}^{a,\alpha } (\beta(t_+-t_-))X A_{0}^{a,\alpha} (\beta(-t_+-t_-))^\dagger -A_{0}^{a,\alpha } X A_{0}^{a,\alpha \, \dagger}}_2 dt_+d t_-  \\ &~~~
 \le \norm{X}_2 \int_{-\infty}^\infty \int_{-\infty}^\infty  h_-(t_-) h_+(t_+) \times  \\&~~~ \nonumber \quad \quad \quad  \left( \norm{A_{0}^{a,\alpha } (\beta(t_+-t_-))-A_{0}^{a,\alpha }}_\infty +\norm{A_{0}^{a,\alpha \, \dagger} (\beta(-t_+-t_-))-A_{0}^{a,\alpha \, \dagger }}_\infty \right) dt_+d t_- 
  \\ &~~~
 \le \norm{X}_2  (2\beta hl) \int_{-\infty}^\infty \int_{-\infty}^\infty  h_-(t_-) h_+(t_+) \left( \vert t-t' \vert +\vert t + t' \vert \right)   dt_+d t_-  \\ &~~~
 \le \beta hl\,\|X\|_2\,.
\end{align*}
This shows that $\sum_{a,\alpha}\widetilde{\mathcal{L}}^{\beta,0}_{a,\alpha}$ converges to the generator of the local depolarizing semigroup with a multiplicative constant $\lambda$ as $\beta\to 0$. Clearly the latter is gapped with $\gamma=-\lambda=\frac{1}{\sqrt{2}e^{1/4}}$, frustration-free, satisfies local topological order with sharp cutoff and is also locally gapped. Therefore, $\widetilde{H}_0:=\widetilde{\mathcal{L}}^{0,0}=\sum_{a,\alpha}\widetilde{\mathcal{L}}^{0,0}_{a,\alpha}$ satisfies the conditions of the unperturbed Hamiltonian in Theorem \ref{thm:stability}. Together with the exponential decay of Equation \eqref{eq:decaywithr}, this shows that the perturbed generator $\widetilde{\mathcal{L}}_{a,\alpha}^{(\beta)}$ obeys all of the assumptions of Theorem \ref{thm:stability}. In this case, both global and local gaps are $\gamma$, the decay $f(r)$ is given by Equation \eqref{eq:decaywithr}, and the existence of a finite Lieb-Robinson velocity follows from the quasi-locality of $\widetilde{\mathcal{L}}_{a,\alpha}^{(\beta)}$ \cite{HastingsKoma2006}.  This implies that there exists a small enough $\beta^*=\mathcal{O}(J^{-1})$ \footnote{That dependence follows from the fact that the decay in Eq. \eqref{eq:decaywithr} only depends on the products $\beta J$ and $\beta h$, and $J \ge h$. The proportionality constant should depend on the lattice dimension.}, where the exact proportionality depends on the lattice dimension and the Lieb-Robinson velocity as determined by the constants in Theorem \ref{thm:stability}, such that for $\beta < \beta^*$ the spectral gap of $\widetilde{\mathcal{L}}^{(\beta)}$, and therefore that of $\mathcal{L}^{(\beta)}$, is lower bounded by $\frac{1}{2\sqrt{2}e^{1/4}}$.

\subsection{Adiabatic algorithm} \label{app:adiabatic}

We now make use of the gap proved in Theorem \ref{thmgaphighT} to devise an adiabatic evolution on the $n$-fold tensor product of Hilbert spaces $\widetilde{\mathcal{H}}_v\simeq  \mathbb{C}^2\otimes \mathbb{C}^2$ interpolating between $\mathscr{H}(0):=\widetilde{\mathcal{L}}^{(0)}$ and $\mathscr{H}(1):=\widetilde{\mathcal{L}}^{(\beta)}$. For this, we resort to the following standard estimate on the performance of the adiabatic evolution:
\begin{lemma}\cite{ambainis2006elementary}
Let $\mathscr{H}(s)$ with $ s \in [0,1 ]$ be an adiabatic path with minimum gap $\gamma$. The minimum adiabatic time $T_{\operatorname{ad}}$ required to achieve an error $\epsilon$ in the final state is bounded as
\begin{equation}
T_{\operatorname{ad}} \le \frac{10}{\epsilon^2} \max \left\{\frac{\norm{\frac{d \mathscr{H}}{ds}}_\infty^3}{\gamma^4}, \frac{\norm{\frac{d \mathscr{H}}{ds}}_\infty \norm{\frac{d^2 \mathscr{H}}{ds^2}}_\infty}{\gamma^3} \right\}\,.
\end{equation}
\end{lemma}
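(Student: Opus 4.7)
The plan is to follow the elementary proof of the quantum adiabatic theorem due to Ambainis and Regev, whose central idea is to bound the squared transition probability out of the ground state via a single integration by parts on oscillatory integrals, followed by Cauchy--Schwarz. This strategy produces a $1/T$ rather than $1/T^{2}$ decay of the error probability, which is exactly what accounts for the characteristic $1/\epsilon^{2}$ dependence in the statement.

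First I would rescale physical time by $t=sT$ so that the Schr\"odinger equation becomes $i\partial_s|\phi(s)\rangle = T\mathscr{H}(s)|\phi(s)\rangle$ with $|\phi(0)\rangle = |\psi_0(0)\rangle$ the ground state of $\mathscr{H}(0)$. Let $P(s)=|\psi_0(s)\rangle\langle\psi_0(s)|$, $Q(s)=I-P(s)$, and let $R(s)=Q(s)(\mathscr{H}(s)-E_0(s))^{-1}Q(s)$ be the reduced resolvent, satisfying $\norm{R(s)}_\infty \le 1/\gamma$ by hypothesis. The adiabatic error is measured by $\norm{Q(1)|\phi(1)\rangle}^{2}$, the probability of exiting the ground subspace. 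Expanding $|\phi(s)\rangle=\sum_j c_j(s) e^{-iT\Phi_j(s)}|\psi_j(s)\rangle$ with $\Phi_j(s)=\int_0^s E_j(s')\,ds'$ produces a coupled ODE whose off-diagonal coupling is $\langle\psi_k|\dot{\mathscr{H}}|\psi_0\rangle/(E_k-E_0)$. A standard Gronwall argument, justified a posteriori once the transition amplitudes are shown to be small, reduces the analysis to the first-order Dyson integral
\begin{equation*}
c_k(1)\approx-\int_0^1 e^{iT\Delta_k(s)}\,\frac{\langle\psi_k(s)|\dot{\mathscr{H}}(s)|\psi_0(s)\rangle}{E_k(s)-E_0(s)}\,ds,
\end{equation*}
whose phase $\Delta_k=\Phi_k-\Phi_0$ satisfies $\dot\Delta_k\ge\gamma$.

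Next I would apply one integration by parts via $e^{iT\Delta_k}=(iT\dot\Delta_k)^{-1}\partial_s e^{iT\Delta_k}$. The boundary terms are of size $\norm{\dot{\mathscr{H}}}_\infty/(T\gamma^{2})$, while the interior term involves $\partial_s$ acting on the resolvent-weighted matrix element, producing contributions of order $\norm{\ddot{\mathscr{H}}}_\infty/(T\gamma^{2})$ and $\norm{\dot{\mathscr{H}}}_\infty^{2}/(T\gamma^{3})$ from the Hellmann--Feynman identity $\partial_s|\psi_0\rangle=-R(s)\dot{\mathscr{H}}(s)|\psi_0\rangle$. Squaring $c_k(1)$ and applying Cauchy--Schwarz to the interior integral yields $|c_k(1)|^{2}=\mathcal{O}(1/T)$ times polynomials in $\norm{\dot{\mathscr{H}}}_\infty$, $\norm{\ddot{\mathscr{H}}}_\infty$, and $\gamma^{-1}$. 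Summing over excited modes in a dimension-independent way is then performed through the resolvent identity
\begin{equation*}
\sum_{k\neq 0}\frac{|\langle\psi_k(s)|\dot{\mathscr{H}}(s)|\psi_0(s)\rangle|^{2}}{(E_k(s)-E_0(s))^{2}}=\norm{R(s)\dot{\mathscr{H}}(s)|\psi_0(s)\rangle}^{2}\le\frac{\norm{\dot{\mathscr{H}}(s)}_\infty^{2}}{\gamma^{2}},
\end{equation*}
together with an analogous collapse for the term containing $\ddot{\mathscr{H}}$. Combining these pieces produces
\begin{equation*}
\norm{Q(1)|\phi(1)\rangle}^{2}\le\frac{C}{T}\max\!\left\{\frac{\norm{\dot{\mathscr{H}}}_\infty^{3}}{\gamma^{4}},\,\frac{\norm{\dot{\mathscr{H}}}_\infty\norm{\ddot{\mathscr{H}}}_\infty}{\gamma^{3}}\right\},
\end{equation*}
and setting the right-hand side equal to $\epsilon^{2}$ and solving for $T$ gives the stated bound, with the numerical constant $10$ absorbing all universal factors.

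The main obstacle is the dimension-independent mode aggregation: a term-by-term summation over eigenstates $|\psi_k\rangle$ would a priori introduce the Hilbert-space dimension as a spurious prefactor. This is circumvented by routing every sum of squared off-diagonal matrix elements through the reduced resolvent $R(s)$, which collapses each mode sum into an operator norm bounded by a power of $\gamma^{-1}$. A secondary technical point is to rigorously justify the Dyson linearization via a Gronwall-type bootstrap, which requires first deriving a weaker a priori bound on $\sum_{k\neq 0}|c_k(s)|^{2}$ that is then self-improved by the oscillatory integral estimate.
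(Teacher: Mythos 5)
The paper does not actually prove this lemma: it is imported verbatim from Ambainis and Regev, whose ``elementary'' proof discretizes $[0,1]$ into short segments, compares the true evolution on each segment with the evolution under the Hamiltonian frozen at the segment's start, and accumulates the per-segment leakage; that hybrid argument is what produces the characteristic $1/\epsilon^2$ runtime and the specific powers $\norm{\dot{\mathscr{H}}}_\infty^3/\gamma^4$ and $\norm{\dot{\mathscr{H}}}_\infty\norm{\ddot{\mathscr{H}}}_\infty/\gamma^3$. Your route --- Dyson expansion in the instantaneous eigenbasis, one integration by parts against the phase $e^{iT\Delta_k}$, and collapsing mode sums through the reduced resolvent --- is instead the standard Kato/Jansen--Ruskai--Seiler proof, a genuinely different argument that, when carried out, proves a different bound.

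This leads to two concrete problems. First, your power counting in $T$ is internally inconsistent: after one integration by parts every piece of $c_k(1)$ carries an explicit $1/T$ prefactor, so the steps you describe give $|c_k(1)|^2=\mathcal{O}(1/T^2)$, not the $\mathcal{O}(1/T)$ you assert; with the correct $1/T^2$ probability decay, ``setting the right-hand side equal to $\epsilon^2$'' yields a runtime of order $\epsilon^{-1}\bigl(\norm{\dot{\mathscr{H}}}_\infty/\gamma^2+\norm{\ddot{\mathscr{H}}}_\infty/\gamma^2+\norm{\dot{\mathscr{H}}}_\infty^2/\gamma^3\bigr)$, which is not the stated inequality and only implies it under extra hypotheses such as $\norm{\dot{\mathscr{H}}}_\infty\gtrsim\epsilon\gamma$. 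So the claimed conclusion does not follow from your own intermediate estimates. Second, the reduction to the first-order Dyson integral ``justified a posteriori by Gronwall'' is the actual hard part and is not closed by what you write: in the adiabatic interaction picture the off-diagonal coupling has size $\norm{\dot{\mathscr{H}}}_\infty/\gamma$ per unit of $s$ independently of $T$, so the higher-order Dyson terms are not small in norm --- their smallness also rests on oscillatory cancellation --- and a weak a priori bound on $\sum_{k\neq0}|c_k(s)|^2$ cannot be bootstrapped into smallness without redoing the stationary-phase analysis at every order. This is precisely why the rigorous versions of this argument work with an exact integral identity for $Q(s)\ket{\phi(s)}$ rather than a truncated Dyson series. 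To obtain the lemma as stated you should either repair both points and accept that you are proving a (different) Jansen--Ruskai--Seiler-type bound, or follow the discretization argument of the cited reference.
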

In our case, the adiabatic path is $\mathscr{H}(s)=\widetilde{\mathcal{L}}^{(\beta_s)}$, with $\beta_s:=s\beta$, starting at $\beta_0=0$ and ending at the desired temperature $\beta=\beta_1$ at $s=1$.
Thus, we need to estimate the derivatives of the Hamiltonian path $\{\mathscr{H}_{ s}\}_{s\in [0,1]}$ w.r.t. parameter $s$, done in the subsections below. In particular, in \eqref{eq:firstdev} and \eqref{eq:seconddev} we show that 

\begin{align}
&\max_{s\in [0,1]}\norm{\frac{d}{ds} {\mathscr{H}}(s)}_{\infty}=\max_{\beta'\in [0,\beta]}\beta'\norm{\frac{d}{d\beta'} \widetilde{\mathcal{L}}^{(\beta')}}_{2\to 2} \le 61\, \beta \, n\,\max_{a,\alpha} \norm{\left[H,A^{a,\alpha} \right]}_\infty\label{eq:devs}\\
&\max_{s\in[0,1]}\norm{\frac{d^2}{ds^2} {\mathscr{H}}(s)}_{\infty}=\max_{\beta'\in[0,\beta]}(\beta')^2\norm {\frac{d^2}{d\beta'^2} \widetilde{\mathcal{L}}^{(\beta')}}_{2\to 2} \le C\,\beta^2\,n \left( \norm{[H,[H,A^{a,\alpha}]]}_\infty+\norm{[H,A^{a,\alpha}]}_\infty^2 \right) \,,
\end{align}
for some constant $C>0$. Morever, notice that for a $(k,l)$-local Hamiltonian, with the notation of Section \ref{sec:setup},
\begin{align*}
    \max_{a,\alpha} \norm{\left[H,A^{a,\alpha} \right]}_\infty &\le 2 hl \\
    \max_{a,\alpha} \norm{[H,[H,A^{a,\alpha}]]}_\infty & \le 4 h^2l^2 k, 
\end{align*}
which are $\mathcal{O}$(1) constants. Combining these bounds with Theorem \ref{thmgaphighT}, we obtain the bound on the runtime of the adiabatic evolution preparing the purified Gibbs state corresponding to $\sigma_\beta$:

\begin{theorem}[Run-time of the adiabatic evolution]\label{thm:adiabatic_runtime}
In the notations of the previous paragraphs, and given $\beta^*$ as introduced in Theorem \ref{thmgaphighT}, for any $\beta\in[0,\beta^*]$, the minimum time required to adiabatically prepare a state $\epsilon$-close to the purified Gibbs state at inverse temperature $\beta$ is bounded as 
\begin{align}
T_{\operatorname{ad}}\le  \frac{  C'  k l^2 (\beta h n )^3 }{\epsilon^2}\,,
\end{align}
for some universal constant $C'>0$.
\end{theorem}

\subsubsection{First derivative}	

It remains to bound the first and second derivatives of $\widetilde{\mathcal{L}}^{(\beta)}$ w.r.t.~the parameter $\beta$. This Hamiltonian can be written in terms of two contributions $\widetilde{\mathcal{L}}^{(\beta) }=\mathscr{H}_{\text{cd}}+\mathscr{H}_{\text{tr}}$. First, it is possible to bound at once the coherent term corresponding to $ \widetilde{\mathcal{C}}^\beta$, which is $	\Delta_\beta^{-\frac{1}{4}}	(B) \otimes {I} + {I} \otimes \Delta_\beta^{-\frac{1}{4}}	(B)^* $ in the purified picture, and the decay term, corresponding to $ \widetilde{\mathcal{D}}_\beta$. Similarly to Equation \eqref{eq:Nrsumcode}, in Corollary III.5 from \cite{chen2023efficient}, it is shown that their total contribution to the effective Hamiltonian is of the form $\mathscr{H}_{\text{cd}}=\frac{1}{2}(N \otimes {I}+ {I} \otimes N^*)$ in the purified picture, where $N^*$ stands for the complex conjugate of $N$ when decomposed in the computational basis, and where we recall that
	\begin{align*}
N= \sum_{a\in \Lambda}\sum_{\alpha\in [3]}\, \int_{-\infty}^\infty n_1(t) e^{-i\beta Ht} \int_{-\infty}^\infty n_2(t') e^{i\beta Ht'}A^{a,\alpha} e^{-2i\beta Ht'}A^{a,\alpha} e^{i\beta Ht'}dt' e^{i\beta Ht}dt,
	\end{align*}
with functions $n_1$ and $n_2$ defined in Equation \eqref{eq:n1} and \eqref{eq:n2} respectively. We see that
\begin{align*}
	\frac{d}{d\beta} N  = \sum_{a\in \Lambda}\sum_{\alpha\in [3]}\, \int_{-\infty}^\infty n_1(t)   \int_{-\infty}^\infty n_2(t') \mathcal{M}(t,t') dt'dt.
\end{align*}
where
\begin{align} \label{eq:hoft} \mathcal{M}(t,t')&=-it \left[H,
e^{i\beta H(t'-t)}A^{a,\alpha} e^{-2i\beta Ht'}A^{a,\alpha} e^{i\beta H(t'+t)} \right]\\&+it' e^{i\beta H(t'-t)}[H,A^{a,\alpha}] e^{-2i\beta Ht'}A^{a,\alpha} e^{i\beta H(t'+t)}+it' e^{i\beta H(t'-t)}A^{a,\alpha} e^{-2i\beta Ht'} [H,A^{a,\alpha}] e^{i\beta H(t'+t)}\,.\nonumber
\end{align}
After some algebra, we find that
\begin{align*}
    \norm{\mathcal{M}(t,t')}_\infty \le 2 \norm{[H,A^{a,\alpha}]}_\infty \left(\vert t \vert + \vert t' \vert \right),
\end{align*}
so that we can bound
\begin{align*}
\norm{	\frac{d}{d\beta} N }_\infty &\le 
 2 \max_{a,\alpha}  \norm{[H,A^{a,\alpha}]}_\infty \sum_{a\in \Lambda}\sum_{\alpha\in [3]}\, \norm{A^{a,\alpha}}_\infty^2 \int_{-\infty}^\infty \vert n_1(t) \vert   \int_{-\infty}^\infty \vert n_2(t') \vert \left(\vert t \vert + \vert t' \vert \right) dt'dt \\& = \frac{ 6 e^{\frac{\pi ^2}{2}}}{\sqrt{2} \pi ^2} \,n\, \max_{a,\alpha}  \norm{[H,A^{a,\alpha}]}_\infty.
\end{align*}
Here, we used that $\|A^{a,\alpha}\|_\infty=1,$ and evaluated the integrals explicitly.

\medskip

Similarly, we upper bound the derivative of the transition term  of the Hamiltonian, corresponding to the contribution of $\widetilde{\Psi}_\beta$, which we can write as 
\begin{align*}
 \mathscr{H}_{\text{tr}}	=  \sum_{a\in \Lambda}\sum_{\alpha\in [3]}\, \int_{-\infty}^\infty \int_{-\infty}^\infty  h_-(t_-) h_+(t_+) A^{a,\alpha}\left(\beta(t_+-t_-)\right)  \otimes  A^{a,\alpha \,*}\left(\beta(t_+-t_-) \right) dt_-dt_+,
\end{align*}
 where we have re-scaled $t,t'$ so that $h_-(t_-), h_+(t_+)$  are independent of $\beta$. We get 
\begin{align*}
\norm{\frac{d}{d\beta} \mathscr{H}_{\text{tr}}}_\infty &\le  \sum_{a\in \Lambda}\sum_{\alpha\in [3]}\, \int_{-\infty}^\infty 2 h_-(t_-) h_+(t_+) \norm{A^{a,\alpha}}_\infty \norm {	\frac{d}{d\beta} \Delta_\beta^{i(t_+-t_-)}	(A^{a,\alpha}) }_\infty
\\ & \le 6 \,n\, \max_{a,\alpha}\norm{\left[H,A^{a,\alpha} \right]}_\infty \int_{-\infty}^\infty \int_{-\infty}^\infty \vert t_+-t_-\vert h_-(t_-) h_+(t_+) dt_-dt_+, \\& \le 
\frac{3 \left(\sqrt{2}+2\right)}{4 \sqrt[4]{e} \sqrt{\pi }}  \,n\, \max_{a,\alpha} \norm{\left[H,A^{a,\alpha} \right]}_\infty,
\end{align*}
where we used once again that $\norm{A^{a,\alpha}}_\infty = 1$ and
\begin{align}\label{eq:d1beta}
\norm {	\frac{d}{d\beta} \Delta_\beta^{i(t_+-t_-)}	A^{a,\alpha} }_\infty &\le \vert t_+-t_-\vert \norm{\left[H,A^{a,\alpha} \right]}_\infty .
\end{align}
Putting everything together, we have that 
\begin{align}\label{eq:firstdev}
\norm {\frac{d}{d\beta} \widetilde{\mathcal{L}}^{(\beta)}}_{2\to 2} \le 61 \,n\, \max_{a,\alpha} \norm{\left[H,A^{a,\alpha} \right]}_\infty.
\end{align}

\subsubsection{Second derivative}

For the coherent and decaying parts, we have
\begin{align*}
\frac{d^2}{d\beta^2} N  = \sum_{a\in \Lambda}\sum_{\alpha\in [3]}\, \int_{-\infty}^\infty n_1(t)   \int_{-\infty}^\infty n_2(t') \frac{d}{d\beta} \mathcal{M}(t,t') dt'dt, 
\end{align*}
we can differentiate $\frac{d}{d\beta} \mathcal{M}(t,t')$ from Equation \eqref{eq:hoft} to obtain a somewhat lengthy expression of nested commutators. From this we obtain 
\begin{align*}
\norm{\frac{d}{d\beta} \mathcal{M}(t,t')}_\infty \le \left( \norm{[H,[H,A^{a,\alpha}]]}_\infty+\norm{[H,A^{a,\alpha}]}_\infty^2 \right)\left(c_1 t^2 + c_2 t'^2 +c_3 tt') \right),
\end{align*}
for $c_i$ some $\mathcal{O}(1)$ constants. Proceeding as above, this means that, for some constant $C_{\text{cd}}$,
\begin{align*}
\norm{\frac{d^2}{d\beta^2} N }_\infty \le \frac{C_{\text{cd}}}{2} \left( \norm{[H,[H,A^{a,\alpha}]]}_\infty+\norm{[H,A^{a,\alpha}]}_\infty^2 \right).
\end{align*}
For the transition part, using the chain rule and the triangle inequality, we have that 
\begin{align*}
\norm{\frac{d^2}{d\beta^2} \mathscr{H}_{\operatorname{tr}}}_\infty &\le  \sum_{a\in \Lambda}\sum_{\alpha\in [3]}\, \int_{-\infty}^\infty 2 h_-(t_-) h_+(t_+) \left( \norm {	\frac{d^2}{d\beta^2} \Delta_\beta^{i(t_+-t_-)}	A^{a,\alpha} }_\infty +\norm {	\frac{d}{d\beta} \Delta_\beta^{i(t_+-t_-)} A^{a,\alpha} }_\infty^2 \right).
\end{align*}
The second term is bounded by \eqref{eq:d1beta}. The first follows from the fact that
\begin{align*}
	\frac{d^2}{d\beta^2} \Delta_\beta^{i(t_+-t_-)}	A^{a,\alpha} = -(t_+-t_-)^2 [H,[H,\Delta_\beta^{i(t_+-t_-)	}A^{a,\alpha}]],
\end{align*}
so that 
$
\norm {	\frac{d^2}{d\beta^2} \Delta_\beta^{i(t_+-t_-)-\frac{1}{4}}	A^{a,\alpha} }_\infty \le (t_+-t_-)^2  \norm{[H,[H,A^{a,\alpha}]]}_\infty$.
The contribution is bounded by
\begin{align*}
\norm{\frac{d^2}{d\beta^2} \mathcal{H}_{\operatorname{tr}}}_\infty &\le 6 n\, \max_{a,\alpha} \norm{[H,[H,A^{a,\alpha}]]}_\infty   \int_{-\infty}^\infty  h_-(t_-) h_+(t_+) (t_+-t_-)^2 dt_-dt_+ \\
&\le \frac{3 \left(3 \pi  \sqrt{2}+8\right)}{16 \sqrt[4]{e} \pi } n\,  \max_{a,\alpha} \norm{[H,[H,A^{a,\alpha}]]}_\infty.
\end{align*}
The numerical constant above is close to $1$, so that finally we bound the second derivative as 
\begin{align}\label{eq:seconddev}
\norm {\frac{d^2}{d\beta^2} \widetilde{\mathcal{L}}^{(\beta)}}_{2\to 2} \le n\, \left( \norm{[H,[H,A^{a,\alpha}]]}_\infty+\norm{[H,A^{a,\alpha}]}_\infty^2 \right) \left( C_{\operatorname{cd}} +1  \right).
\end{align}


\section{Zero temperature quantum Gibbs sampling and the complexity class $\mathsf{GibbsQP}$}\label{GibbsQPsec}

In this section, we prove the final main theorem of the present paper. For this, we first need a characterization of spectral gaps of Lindbladians in terms of the so-called Poincar\'{e} inequality \cite{PoincareRef,Kastoryano_2013}: for any $X\in \mathcal{O}_V$,
\begin{align}\label{eq:vargap}
\operatorname{gap}(\mathcal{L}^{(\sigma_E,\beta)})\,\operatorname{Var}_{\sigma_\beta }(X)\le \mathcal{E}^{(\sigma_E,\beta)}(X)\,.
\end{align}
Above, $\operatorname{Var}_{\sigma_\beta }(X):=\|X-\tr{\sigma_\beta X}I\|^2_{\sigma_\beta}$ corresponds to the variance of $X$ in the Gibbs state $\sigma_\beta$, whereas $\mathcal{E}^{(\sigma_E,\beta)}(X):=-\langle X,\mathcal{L}^{(\sigma_E,\beta)\dagger}(X)\rangle_{\sigma_\beta}$ is the so-called Dirichlet form associated to the Metropolis-type generator \cite{chen2023efficient}: 
\begin{lemma} For an inverse temperature $\beta>0$, a tunable width $\sigma_E$ and a Hamiltonian $H=\sum_E EP_E\ge 0$ with Bohr frequencies $\nu\in B(H)$,
\begin{align}\label{eq:initialL}
\mathcal{L}^{(\sigma_E,\beta)\dagger}(X):=\!\!\!\sum_{a,\nu_1,\nu_2}\alpha_{\nu_1,\nu_2}\left(A^{a\dagger}_{\nu_2}XA^a_{\nu_1}-\frac{1}{2\operatorname{cosh}\big(\frac{\beta(\nu_1-\nu_2)}{4}\big)}\Big[e^{\frac{\beta}{4}(\nu_1-\nu_2)}A^{a\dagger}_{\nu_2}A^a_{\nu_1}X+e^{\frac{\beta}{4}(\nu_2-\nu_1)}XA^{a\dagger}_{\nu_2}A^a_{\nu_1}\Big]\right)
\end{align}
with coefficients $\alpha_{\nu_1,\nu_2}$ given by 
\begin{align}\label{alphaexactexpre}
\alpha_{\nu_1,\nu_2}=\frac{e^{-\beta\frac{(\nu_1+\nu_2)}{4}-\frac{(\nu_1-\nu_2)^2}{8\sigma_E^2}}}{4}\,\left(e^{-\beta\frac{\nu_1+\nu_2}{4}}\operatorname{Erfc}\Big(\frac{\beta\sigma_E^2-(\nu_1+\nu_2)}{2\sqrt{2}\sigma_E}\Big)+e^{\beta\frac{\nu_1+\nu_2}{4}}\operatorname{Erfc}\Big(\frac{\nu_1+\nu_2+\beta\sigma_E^2}{2\sqrt{2}\sigma_E}\Big)\right)\,.
\end{align}
Above, $\{A^{a}\}_{a\in[m]}$ is a set of self-adjoint jump operators to be fixed later, and $$A^a_\nu:=\sum_{E-E'=\nu}P_EA^a P_{E'}\,.$$
\end{lemma}
\begin{proof}
A general form for the generator of a KMS-symmetric Gibbs sampler was given in \cite[Corollary II.2]{chen2023efficient}: in the Heisenberg picture:
\begin{align*}
\mathcal{L}^\dagger(X)=i\sum_{\nu}[B_\nu,X]+\sum_{a,\nu_1,\nu_2}\alpha_{\nu_1,\nu_2}\left(A^{a\dagger }_{\nu_2} X A^a_{\nu_1}-\frac{1}{2}\{A^{a\dagger}_{\nu_2} A^a_{\nu_1},X\}\right)
\end{align*}
for some coefficients $\alpha_{\nu_1,\nu_2}$ satisfying certain algebraic conditions, and 
\begin{align*}
B_\nu:=\sum_{a,\nu_1-\nu_2=\nu}\frac{\operatorname{tanh}(-\beta(\nu_1-\nu_2)/4)}{2i}\alpha_{\nu_1,\nu_2}A_{\nu_2}^{a\dagger}A^a_{\nu_1}\,.
\end{align*}
We recall from \cite{chen2023efficient} the coefficients $\alpha_{\nu_1,\nu_2}$ associated to the Metropolis-type generator $\mathcal{L}^{(\sigma_E,\beta)\dagger}$:
\begin{align}\label{alphametrochen}
\alpha_{\nu_1,\nu_2}=\int_{\frac{\sigma_E^2\beta}{2}}^\infty \frac{\sqrt{\beta}}{4\sqrt{\pi x}}\, e^{-\frac{\beta(\nu_1+\nu_2+2x)^2}{16 x}}dx\,e^{-\frac{(\nu_1-\nu_2)^2}{8\sigma_E^2}}\equiv G^{(\sigma_E,\beta)}(\nu_1+\nu_2) e^{-\frac{(\nu_1-\nu_2)^2}{8\sigma_E^2}}\,,
\end{align}
with 
\begin{align*}
G^{(\sigma_E,\beta)}(\nu):=\int_{\frac{\beta\sigma_E^2}{2}}^\infty\, \frac{1}{4}\sqrt{\frac{\beta}{\pi x}}e^{-\frac{\beta \nu^2}{16 x}-\frac{\beta x}{4}}dx\,.
\end{align*}
A direct integration gives the expression given in Equation \eqref{alphaexactexpre}. Putting things together and using that $1\pm\operatorname{tanh}(\beta(\nu_1-\nu_2)/4)=e^{\pm\beta(\nu_1-\nu_2)/4}/\operatorname{cosh}(\beta(\nu_1-\nu_2)/4)$, the expression for $\mathcal{L}^{(\sigma_E,\beta)\dagger}$ follows.
\end{proof}
Just as for its Gaussian analogue used in \Cref{sec:highTgibbsgap}, the Lindbladian constructed above is reversible in the Heisenberg picture:
\begin{align*}
\langle X,\mathcal{L}^{(\sigma_E,\beta)\dagger}(Y)\rangle_{\sigma_\beta}=\langle \mathcal{L}^{(\sigma_E,\beta)\dagger}(X),Y\rangle_{\sigma_\beta}\,,
\end{align*}
 where the Gibbs state $\sigma_\beta$ is the unique invariant state of the evolution generated by $\mathcal{L}^{(\sigma_E,\beta)}$ whenever the commutant $\{A^a\}'=\mathbb{C}I$. In the limit $\sigma_E\to 0$, the above generator converges to that of a Davies dynamics:
\begin{align}\label{eq:initialL0}
\mathcal{L}^{(0,\beta)\dagger}(X)=\frac{1}{2}\sum_{a,\nu}\Big(e^{-\beta\nu}1(\nu>0)+1(\nu\le 0)\Big) \,\Big[A^{a\dagger}_\nu XA^a_\nu-\frac{1}{2}\{A^{a\dagger}_\nu A^a_\nu,X\}\Big]\,.
\end{align}

Indeed, in the limit $\sigma_E\to 0$, it is easy to see that $\alpha_{\nu_1,\nu_2}\to 2^{-1}\delta_{\nu_1,\nu_2}\big(e^{-\beta\nu_1}1(\nu_1>0)+1(\nu_1\le 0)\big)$.
In what follows, we will also make use of two other representations of the dynamics: first, we recall the mapping of KMS-symmetric Lindbladians into Hamiltonians already introduced in Section \ref{sec:setup}: for any $\beta>0$, we define the operator 
\begin{align}\label{eq:Lbetatilde2}
\widetilde{\mathcal{L}}^{(\sigma_E,\beta)}(X):=\sigma_\beta^{-\frac{1}{4}}\mathcal{L}^{(\sigma_E,\beta)}(\sigma_\beta^{\frac{1}{4}}X\sigma_\beta^{\frac{1}{4}} )\sigma_\beta^{-\frac{1}{4}}\,.
\end{align}
Using the expression for $\mathcal{L}^{(\sigma_E,\beta)}$ above, we find that (see also \cite[Proposition A.1]{chen2023efficient})
\begin{align}\label{eqLbetafourier}
\widetilde{\mathcal{L}}^{(\sigma_E,\beta)}(X):=\sum_{a,\nu_1,\nu_2}h_{\nu_1,\nu_2}\,A^a_{\nu_1}X(A^a_{\nu_2})^\dagger+\frac{1}{2}\big\{N_{\sigma_E,\beta},X\big\}
\end{align}
with $$h_{\nu_1,\nu_2}=\alpha_{\nu_1,\nu_2}e^{\beta(\nu_1+\nu_2)/4}$$ and 
\begin{align}\label{defnnu1nu2}N_{\sigma_E,\beta}:=-\sum_{a,\nu_1,\nu_2}\frac{\alpha_{\nu_1,\nu_2}}{\cosh(\beta(\nu_1-\nu_2)/4)} (A^a_{\nu_2})^\dagger A^a_{\nu_1}=N_{\sigma_E,\beta}^\dagger\equiv -\sum_{a,\nu_1,\nu_2}n_{\nu_1,\nu_2}(A^a_{\nu_2})^\dagger A^a_{\nu_1} \,.
\end{align}
Since $\sigma_\beta$ is a steady state of the evolution and we chose the jump operators such that it is also unique, $\operatorname{Ker}(\widetilde{\mathcal{L}}^{(\sigma_E,\beta)})= \vert \sqrt{\sigma_\beta} \rangle$. Moreover, $\widetilde{\mathcal{L}}^{(\sigma_E,\beta)}$ is self-adjoint with respect to the Hilbert-Schmidt inner product for all $\sigma_E,\beta>0$, by the KMS symmetry of $\mathcal{L}^{(\sigma_E,\beta)\dagger}$. In the limit $\sigma_E\to 0$, we get
\begin{align} \label{eq:L0beta}
\widetilde{\mathcal{L}}^{(0,\beta)}(X)=\frac{1}{2}\sum_{a,\nu}e^{-\frac{\beta|\nu|}{2}}A^a_\nu XA^{a\dagger}_\nu-\frac{1}{4}\sum_{a,\nu>0}e^{-\beta\nu}\{A^{a\dagger}_\nu A^{a}_\nu,X\}-\frac{1}{4}\sum_{a,\nu\le 0}\{A^{a\dagger}_\nu A^a_\nu,X\}\,.
\end{align}
Hence, in the limit $\beta\to\infty$, we find that
\begin{align}\label{eq:L0inf}
\widetilde{\mathcal{L}}^{(0,\infty)}(X)=\frac{1}{2}\sum_{a}\left(A^a_0 XA^{a}_0-\frac{1}{2}\{(A^{a}_0)^2,X\}\right)-\frac{1}{4}\sum_{a,\nu<0}\{A^{a\dagger}_\nu A^{a}_\nu,X\}\,.
\end{align}
We also recall the time representation of the generator $\widetilde{\mathcal{L}}^{(\sigma_E,\beta)}$ which we have already used in \Cref{appendixGibbssampling} in the context of a Gaussian filter. In \cite{chen2023efficient} (Section III.B, Proposition A.2, Corollary III.4, Corollary A.4 and Proposition B.1), it was proven that
\begin{align}
\widetilde{\mathcal{L}}^{(\sigma_E,\beta)}_H(X)=\mathcal{T}^{(\sigma_E,\beta)}_H(X)+\frac{1}{2}\{N^{(\sigma_E,\beta)}_H,X\}
\end{align}
where 
\begin{align}\label{eq:bigT}
\mathcal{T}^{(\sigma_E,\beta)}_H(X)=\sum_a\int_{\mathbb{R}^2}h_-(t_-)h_+(t_+) A(t_+-t_-)X A(-t_--t_+) dt_+dt_-\,,
\end{align}
with
\begin{align*}
&h_-(t):=\frac{\sigma_E}{\pi}e^{-2 \sigma_E^2 t^2} ,
&h_+(t):= \frac{1}{4\beta \sqrt{2 \pi}} \frac{e^{-2\sigma_E^2 t^2-\frac{\beta^2 \sigma_E^2}{8}}}{\frac{t^2}{{\beta^2}}+\frac{1}{16}} ,
\end{align*}
and 
\begin{align}\label{eq:NsigmabetaH}
N^{(\sigma_E,\beta)}_H=\sum_a\int_{\mathbb{R}^2}f_-(t_-)f_+(t_+)\,e^{-iHt_-}A^{a\dagger}(t_+)A^a(-t_+)e^{iHt_-}dt_+dt_-\,,
\end{align}
with
\begin{align*}
&f_-(t):=\frac{2\sigma_E}{\pi\beta}\,\left(\frac{1}{\cosh\Big(\frac{2\pi t}{\beta}\Big)}\ast_t\,e^{-2\sigma_E^2t^2}\right)\quad \text{ with }\quad \|f_-\|_{L_1}=\frac{1}{\sqrt{2\pi}}\,,\\
&f_+(t):=\lim_{\eta\to 0^+}1(|t|\ge\eta)\frac{e^{-2\sigma_E^2t^2-i\beta\sigma_E^2t}}{2\sqrt{2\pi}t\big(\frac{2t}{\beta}+i\big)}+\sqrt{\frac{\pi}{8}}\delta(t)\,.
\end{align*}
Finally, we can express the Dirichlet form associated to the above generator in the following symmetric form.

\begin{lemma}\label{symmetrizationDirichlet}
For any two operators $X,Y\in\mathcal{O}_V$,
\begin{align*}
\mathcal{E}^{(\sigma_E,\beta)}(X,Y)&:=-\langle X,\mathcal{L}^{(\sigma_E,\beta)\dagger}(Y)\rangle_{\sigma_\beta}=\sum_{a,\nu_1,\nu_2}\overline{\alpha}_{\nu_1,\nu_2}\,\langle \partial^{a}_{\nu_2}(X),\partial^a_{\nu_1}(Y)\rangle_{\sigma_\beta}=\,,
\end{align*}
where $\partial^a_\nu:=[A^a_\nu,(.)\,]$ and $$\overline{\alpha}_{\nu_1,\nu_2}:=\frac{h_{\nu_1,\nu_2}}{e^{\frac{(\nu_1-\nu_2)\beta}{4}}+e^{\frac{(\nu_2-\nu_1)\beta}{4}}}\,.$$
\end{lemma}
\begin{proof}
The result follows from simple algebra: using $\sigma_\beta A^a_\nu \sigma_\beta^{-1}=e^{-\beta\nu}A^a_\nu$,
\begin{align*}
&\sum_{a,\nu_1,\nu_2}\overline{\alpha}_{\nu_1,\nu_2}\,\langle \partial^{a}_{\nu_2}(X),\partial^a_{\nu_1}(Y)\rangle_{\sigma_\beta}\\
&\qquad=\sum_{a,\nu_1,\nu_2}\overline{\alpha}_{\nu_1,\nu_2}\tr{\sigma_\beta^{\frac{1}{2}}[A^a_{\nu_2},X]^\dagger \sigma_\beta^{\frac{1}{2}}[A^a_{\nu_1},Y]}\\
&\qquad =\sum_{a,\nu_1,\nu_2}\overline{\alpha}_{\nu_1,\nu_2}\left(
\tr{\sigma_\beta^{\frac{1}{2}}\big[X^\dagger A^{a\dagger}_{\nu_2}-A^{a\dagger}_{\nu_2}X^\dagger\big] \sigma_\beta^{\frac{1}{2}}[A^a_{\nu_1},Y]}
\right)\\
&\qquad =\sum_{a,\nu_1,\nu_2}\overline{\alpha}_{\nu_1,\nu_2}\left(
\tr{\sigma_\beta^{\frac{1}{2}}X^\dagger \sigma_\beta^{\frac{1}{2}}A^{a\dagger}_{\nu_2} [A^a_{\nu_1},Y]}e^{-\frac{\beta\nu_2}{2}}  +e^{\frac{\beta\nu_2}{2}}\tr{ \sigma_\beta^{\frac{1}{2}}X^\dagger \sigma_\beta^{\frac{1}{2}}[Y,A^a_{\nu_1}]A^{a\dagger}_{\nu_2}}
\right)\\
&\qquad =\langle X,\mathcal{K}(Y)\rangle_{\sigma_\beta}\,,
\end{align*}
where 
\begin{align*}
\mathcal{K}(X):=\sum_{a,\nu_1,\nu_2}\overline{\alpha}_{\nu_1,\nu_2} \left(A^{a\dagger}_{\nu_2}A^a_{\nu_1}Xe^{-\frac{\beta\nu_2}{2}}+e^{\frac{\beta\nu_2}{2}}XA^a_{\nu_1}A^{a\dagger}_{\nu_2}-e^{-\frac{\beta\nu_2}{2}}A^{a\dagger}_{\nu_2}XA^a_{\nu_1}-e^{\frac{\beta\nu_2}{2}}A^a_{\nu_1}XA^{a\dagger}_{\nu_2}        \right)\,.
\end{align*}
Now, we have $\overline{\alpha}_{\nu_1,\nu_2}e^{-\frac{\beta\nu_2}{2}}=\alpha_{\nu_1,\nu_2}\frac{e^{\beta(\nu_1-\nu_2)/4}}{e^{(\nu_1-\nu_2)\beta/4}+e^{(\nu_2-\nu_1)\beta/4}}$ and $\overline{\alpha}_{\nu_1,\nu_2}e^{\frac{\beta\nu_2}{2}}=\alpha_{\nu_1,\nu_2}\frac{e^{\beta(\nu_1+3\nu_2)/4}}{e^{(\nu_1-\nu_2)\beta/4}+e^{(\nu_2-\nu_1)\beta/4}}$, therefore,
\begin{align*}
\sum_{a,\nu_1,\nu_2}\overline{\alpha}_{\nu_1,\nu_2}\,e^{\frac{\beta\nu_2}{2}}A^a_{\nu_1}XA^{a\dagger}_{\nu_2}&=\sum_{a,\nu_1,\nu_2}\alpha_{\nu_1,\nu_2}\frac{e^{\beta(\nu_1+3\nu_2)/4}}{e^{(\nu_1-\nu_2)\beta/4}+e^{(\nu_2-\nu_1)\beta/4}}A^a_{\nu_1}XA^{a\dagger}_{\nu_2}\\
&=\sum_{a,\nu_1,\nu_2} \alpha_{-\nu_2,-\nu_1}\frac{e^{\beta(-\nu_2-3\nu_1)/4}}{e^{(\nu_1-\nu_2)\beta/4}+e^{(\nu_2-\nu_1)\beta/4}}A^{a\dagger}_{\nu_2}XA^{a}_{\nu_1}\\
&=\sum_{a,\nu_1,\nu_2} \alpha_{\nu_1,\nu_2}\frac{e^{\beta(\nu_2-\nu_1)/4}}{e^{(\nu_1-\nu_2)\beta/4}+e^{(\nu_2-\nu_1)\beta/4}}A^{a\dagger}_{\nu_2}XA^{a}_{\nu_1}
\end{align*}
where the second identity follows by the change of variables $\nu_1\leftarrow -\nu_2$, $\nu_2\leftarrow -\nu_1$, while the last identity arises from $\alpha_{-\nu_2,-\nu_1}=\alpha_{\nu_1,\nu_2}e^{\beta\frac{\beta(\nu_1+\nu_2)}{2}}$. Hence,
\begin{align*}
\sum_{a,\nu_1,\nu_2}\overline{\alpha}_{\nu_1,\nu_2} \left(e^{-\frac{\beta\nu_2}{2}}A^{a\dagger}_{\nu_2}XA^a_{\nu_1}+e^{\frac{\beta\nu_2}{2}}A^a_{\nu_1}XA^{a\dagger}_{\nu_2}        \right)=\sum_{a,\nu_1,\nu_2}\,\alpha_{\nu_1,\nu_2}\,A^{a\dagger}_{\nu_2}XA^a_{\nu_1}\,
\end{align*}
and $\mathcal{K}=-\mathcal{L}^{(\sigma_E,\beta)\dagger}$. The result follows.
\end{proof}
In the limit $\sigma_E\to 0$, the above simplifies to
\begin{align*}
\mathcal{E}^{(0,\beta)}(X,Y)=\frac{1}{4}\sum_{a,\nu}e^{-\frac{\beta|\nu|}{2}}\langle \partial^a_\nu(X),\partial^a_\nu(Y)\rangle_{\sigma_\beta}\,.
\end{align*}




In short, the proof of Theorem II.5 relies on a perturbative analysis of the generator $\mathcal{L}^{(\sigma_E,\beta)\dagger}$ around $\frac{1}{\beta}=0$, $\sigma_E=0$, as well as corresponding to certain perturbations of $H$. Interestingly, a similar argument was already considered in the classical literature (see e.g. \cite{caputo2011zero}). In the next subsection, we gather the relevant perturbation bounds, which we believe to be of independent interest.

\subsection{Perturbation analysis of Metropolis-type generators}

\subsubsection{Perturbation bounds in $\sigma_E$}

We begin with a perturbation bound around $\sigma_E=0$ for fixed $\beta$ and $H$. In what follows, we denote $\Delta_\nu(H):=\min_{\nu_1\ne \nu_2\in B(H)}|\nu_1-\nu_2|$ and $\Delta_E(H):=\min_{E\ne E'\in \operatorname{spec}(H)}|E-E'|$, and $M$ the total number of eigenvalues of $H$. Clearly, $\Delta_\nu(H)\le \Delta_E(H)$.
\begin{lemma}\label{le:perturbsigmaE}
Given the definition in Eq. \eqref{eq:initialL} and Eq. \eqref{eq:initialL0}, as long as $\beta\Delta_\nu(H)\ge \sqrt{3}$ and
\begin{align*}
\delta_{\beta,\sigma_E}:=m\max_{a,\nu}\|A^a_\nu\|_\infty^2M^2\operatorname{gap}(\mathcal{L}^{(0,\beta)\dagger})^{-1}\big(2M^2 e^{-\frac{\beta \Delta_\nu(H)}{8}-\frac{\Delta_\nu(H)^2}{8\sigma_E^2}}\,+ \,(\beta\sigma_E)^2e^{-\frac{\beta\Delta_E}{4}}\big)<1\,,
\end{align*}
then
\begin{align*}
\operatorname{gap}(\mathcal{L}^{(\sigma_E,\beta)\dagger})\ge \operatorname{gap}(\mathcal{L}^{(0,\beta)\dagger})\big(1-\delta_{\beta,\sigma_E}\big)\,.
\end{align*}
\end{lemma}

\begin{proof}
By the variational formulation of the gap in Eq. \eqref{eq:vargap}, and since $\mathcal{L}^{(\sigma_E,\beta)}$ and $\mathcal{L}^{(0,\beta)}$ share the same kernel $\mathbb{C}\sigma_\beta$, it is sufficient to lower bound $\frac{\mathcal{E}^{(\sigma_E,\beta)}(X)}{\|X\|_{\sigma_\beta}^2}$ in terms of $\frac{\mathcal{E}^{(0,\beta)}(X)}{\|X\|_{\sigma_\beta}^2}$ for any observable $X$ with $\tr{\sigma_\beta X}=0$. We first recall the expression for the coefficients $\alpha_{\nu_1,\nu_2}$ written in \eqref{alphametrochen}:

\begin{align}\label{alphametrochen2}
\alpha_{\nu_1,\nu_2}=\int_{\frac{\sigma_E^2\beta}{2}}^\infty \frac{\sqrt{\beta}}{4\sqrt{\pi x}}\, e^{-\frac{\beta(\nu_1+\nu_2+2x)^2}{16 x}}dx\,e^{-\frac{(\nu_1-\nu_2)^2}{8\sigma_E^2}}\equiv G^{(\sigma_E,\beta)}(\nu_1+\nu_2) e^{-\frac{\beta(\nu_1+\nu_2)}{4}-\frac{(\nu_1-\nu_2)^2}{8\sigma_E^2}}\,,
\end{align}
with 
\begin{align*}
G^{(\sigma_E,\beta)}(\nu):=\int_{\frac{\beta\sigma_E^2}{2}}^\infty\, \frac{1}{4}\sqrt{\frac{\beta}{\pi x}}e^{-\frac{\beta\nu^2}{16 x}-\frac{\beta x}{4}}dx\,.
\end{align*}
Hence 
\begin{align*}
h_{\nu_1,\nu_2}=G^{(\sigma_E,\beta)}(\nu_1+\nu_2)\,e^{-\frac{(\nu_1-\nu_2)^2}{8\sigma_E^2}}
\end{align*}
and 
\begin{align*}
\overline{\alpha}_{\nu_1,\nu_2}=\frac{G^{(\sigma_E,\beta)}(\nu_1+\nu_2)}{e^{\frac{\beta(\nu_1-\nu_2)}{4}}+e^{\frac{\beta(\nu_2-\nu_1)}{4}}}\,e^{-\frac{(\nu_1-\nu_2)^2}{8\sigma_E^2}}\,.
\end{align*}
Next, for any $\nu_1\ne \nu_2$
\begin{align*}
\left|\overline{\alpha}_{\nu_1,\nu_2}\right|&\le G^{(\sigma_E,\beta)}(\nu_1+\nu_2)e^{\frac{-\beta\Delta_\nu(H)}{4}} e^{-\frac{\Delta_\nu(H)}{8\sigma_E^2}}\\
&\le e^{\frac{-\beta\Delta_\nu(H)}{4}}e^{-\frac{\Delta_\nu(H)^2}{8\sigma_E^2}}\frac{1}{4\sqrt{\pi}}\int_0^\infty \frac{1}{\sqrt{u}}e^{-\frac{u}{4}}du\equiv \frac{e^{\frac{-\beta\Delta_\nu(H)}{4}}e^{-\frac{\Delta_\nu(H)^2}{8\sigma_E^2}}}{2}\,\,.
\end{align*}
Therefore, denoting by $\mathcal{E}_{\operatorname{diag}}^{(\sigma_E,\beta)}(X)$ the diagonal part of the Dirichlet form, i.e. 
\begin{align*}
\mathcal{E}_{\operatorname{diag}}^{(\sigma_E,\beta)}(X):=\sum_{a,\nu}\overline{\alpha}^{(\sigma_E,\beta)}_{\nu,\nu}\,\langle \partial^a_{\nu}(X),\partial^a_\nu(X)\rangle_{\sigma_\beta}\,,
\end{align*}
we have that
\begin{align*}
\Big|\mathcal{E}^{(\sigma_E,\beta)}(X)-\mathcal{E}_{\operatorname{diag}}^{(\sigma_E,\beta)}(X)\Big|\le 2M^4m e^{-\frac{\beta\Delta_\nu(H)}{8}}e^{-\frac{\Delta_\nu(H)^2}{8\sigma_E^2}}\max_{a,\nu}\|A^a_\nu\|_\infty^2 \|X\|_{\sigma_\beta}^2\equiv \epsilon_\beta\|X\|_{\sigma_\beta}^2\,.
\end{align*}
Next, we aim at controlling $\mathcal{E}_{\operatorname{diag}}^{(\sigma_E,\beta)}(X)$ in terms of $\mathcal{E}^{(0,\beta)}(X)$. For this, we consider
\begin{align*}
\big| \overline{\alpha}^{(\sigma_E,\beta)}_{\nu,\nu}- \overline{\alpha}^{(0,\beta)}_{\nu,\nu}\big|=\frac{1}{2}\Big|G^{(\sigma_E,\beta)}(2\nu)-G^{(0,\beta)}(2\nu)\Big|&\le \frac{\sqrt{\beta}}{8\sqrt{\pi}}\int_0^{\beta\sigma_E^2}\frac{1}{\sqrt{ x}}\, e^{-\frac{\beta\Delta_E^2}{4x}-\frac{\beta x}{4}}dx \\
&\le \frac{\beta^{\frac{3}{2}}\sigma_E^2}{8\sqrt{\pi}}\sup_{x>0} f(x)\,,
\end{align*}
where $f(x):=\frac{1}{\sqrt{x}}e^{-\frac{\beta\Delta_E^2}{4x}-\frac{\beta x}{4}}$. Differentiating the function, 
\begin{align*}
f'(x)=\frac{1}{4x^{\frac{5}{2}}}\left(-2x+{\beta\Delta_E^2}-{\beta x^2}\right)\,e^{-\frac{\beta\Delta_E^2}{4x}-\frac{\beta x}{4}}\,.
\end{align*}
The function $f'$ has manifestly two zeros $x_{\pm}=\frac{-1\pm \sqrt{1+(\beta\Delta_E)^2}}{\beta}$, and only $x_+\ge 0$. Clearly, $x$ corresponds to a maximum of $f$, and therefore
\begin{align*}
\big| \overline{\alpha}^{(\sigma_E,\beta)}_{\nu,\nu}- \overline{\alpha}^{(0,\beta)}_{\nu,\nu}\big|\le \frac{\beta^{\frac{3}{2}}\sigma_E^2}{8\sqrt{\pi}}\,f(x_+)\le \frac{1}{4}\,(\beta\sigma_E)^2\, e^{-\frac{\beta\Delta_E}{4}}
\end{align*}
where we further assumed that $\beta\Delta_E\ge \sqrt{3}$. Hence, 
\begin{align*}
\Big|\mathcal{E}^{(0,\beta)}(X)-\mathcal{E}_{\operatorname{diag}}^{(\sigma_E,\beta)}(X)\Big|\le  m M^2\,\max_{a,\nu}\|A^a_\nu\|_\infty^2 \|X\|^2_{\sigma_\beta}\,(\beta\sigma_E)^2
\,e^{-\frac{\beta\Delta_E}{4}}\equiv \epsilon_\beta' \|X\|^2_{\sigma_\beta}\,.
\end{align*}
Combining the bounds found above, we have that
\begin{align*}
\Big|\mathcal{E}^{(\sigma_E,\beta)}(X)-\mathcal{E}^{(0,\beta)}(X)\Big|\le (\epsilon_\beta+\epsilon_\beta')\,\|X\|_{\sigma_\beta}^2\,.
\end{align*}
Therefore, as long as $\operatorname{gap}(\mathcal{L}^{(0,\beta)\dagger})^{-1}(\epsilon_\beta+\epsilon_\beta')<1$, we have that, for all observables $X$ such that $\tr{\sigma_\beta X}=0$,
\begin{align*}
\|X\|_{\sigma_\beta}^2\le \operatorname{gap}(\mathcal{L}^{(0,\beta)\dagger})^{-1}\mathcal{E}^{(0,\beta)}(X)\le \operatorname{gap}(\mathcal{L}^{(0,\beta)\dagger})^{-1}\mathcal{E}^{(\sigma_E,\beta)}(X)+\operatorname{gap}(\mathcal{L}^{(0,\beta)\dagger})^{-1}(\epsilon_\beta+\epsilon_\beta')\|X\|_{\sigma_\beta}^2\,,
\end{align*}
which implies that 
\begin{align*}
\operatorname{gap}(\mathcal{L}^{(\sigma_E,\beta)\dagger})\ge \operatorname{gap}(\mathcal{L}^{(0,\beta)\dagger})\Big[1-\operatorname{gap}(\mathcal{L}^{(0,\beta)\dagger})^{-1}(\epsilon_\beta+\epsilon_\beta')\Big]\,.
\end{align*}

\end{proof}

\subsubsection{Perturbation bounds in $\beta$}

Next, we prove stability of the spectral gap under perturbations of the generator from infinite to finite $\beta$.

\begin{lemma}\label{le:lemmainftytobeta}
Given the definitions in Eq. \eqref{eq:L0beta} and Eq. \eqref{eq:L0inf}, as long as 
\begin{align*}
\delta_\beta':= m M^2\max_{a,\nu}\|A^a_\nu\|_{\infty}^2\, e^{-\frac{\beta\Delta_E(H)}{2}}<\operatorname{gap}(\mathcal{L}^{(0,\beta)\dagger})\,,
\end{align*}
then
\begin{align*}
\operatorname{gap}(\mathcal{L}^{(0,\beta)\dagger})\ge \operatorname{gap}(\mathcal{L}^{(0,\infty)\dagger})-\delta_{\beta}'\,.
\end{align*}
\end{lemma}

\begin{proof}
By construction, since both $\widetilde{\mathcal{L}}^{(0,\beta)}$ and $\widetilde{\mathcal{L}}^{(0,\infty)}$ have one-dimensional kernels, the result follows from standard perturbation bounds on eigenvalues, see e.g.~\cite[p. 63]{Bhatia1997}.
\end{proof}

\subsubsection{Perturbation bounds in $H$}

Next, we aim at controlling the perturbation in the Lindbladian resulting from a perturbation of its underlying Hamiltonian. For this, it will be more convenient to work in the time representation.

\begin{lemma}\label{perturbedTpart} Considering the definition in Eq. \eqref{eq:bigT}, given $H=H_0+V$, and assuming that $\beta\sigma_E>1$,

\begin{align*}
\norm{\mathcal{T}^{(\sigma_E,\beta)}_{H} - \mathcal{T}^{(\sigma_E,\beta)}_{H_0}}_{2 \rightarrow 2}\le \frac{2m\norm{V}_\infty}{\sigma_E} \,.
\end{align*}
\end{lemma}

\begin{proof}
The result follows easily from the following few lines:
\begin{align*}
&\norm{\mathcal{T}^{(\sigma_E,\beta)}_{H'} - \mathcal{T}^{(\sigma_E,\beta)}_H }_{2 \rightarrow 2} \\ &\le  \sum_a\int_{\mathbb{R}^2}h_-(t_-)h_+(t_+) \norm{ \Delta_{H'}^{-i(t_+-t_-)}A^a\otimes \Delta_{H'}^{i(t_++t_-)} A^a -\Delta_H^{-i(t_+-t_-)}A^a\otimes \Delta_H^{i(t_++t_-)} A^a }_\infty dt_+dt_- 
\\ & \le  4 \norm{V}_\infty \sum_a \norm{A^a}_\infty \int_{\mathbb{R}^2}h_-(t_-)h_+(t_+) \left( \vert t_+ \vert +\vert t_- \vert \right) dt_+dt_- 
\\ & \le   \frac{\norm{V}_\infty}{\sigma_E} \left(\frac{1}{2} + \frac{2}{\pi} \frac{e^{-\frac{\beta^2 \sigma_E^2}{8}}}{\beta \sigma_E} \right)m\,,
\end{align*}
where the last line simply follows from a direct computation of the $L_1$ norms of $h_+$ and $h_-$, respectively.
\end{proof}

Next, we aim at deriving a similar perturbation bound for the operators $N$. Due to the $1/t$ and Dirac $\delta$ function involved in each part of $f_+$, the latter is not integrable and should be understood in the sense of distributions. For this reasons. the authors of \cite{chen2023efficient} proposed a way of approximating the operator $N^{(\sigma_E,\beta)}_H$ in Eq. \eqref{eq:NsigmabetaH} that is more stable to perturbations. Since the approximation was only carried out in the case where $\sigma_E=\frac{1}{\beta}$, we first need to extend their scheme to the general case.

\begin{lemma}\label{lemmaNpart}
Given the definition in Eq. \eqref{eq:NsigmabetaH} and assuming that $\beta>1$ for any $\frac{\beta}{(\beta\sigma_E)^4} > \eta>0$, $\|N^{(\sigma_E,\beta)}_H-N^\eta\|_\infty\le \frac{7\eta m}{2\pi}\norm{H}$, with 
\begin{align*}
N^\eta:=\sum_a \int_{\mathbb{R}^2} f_-(t_-)\left(\hat{f}^\eta_+(t_+)+\sqrt{\frac{\pi}{8}}\delta(t_+)\right)\,e^{-iHt_-} A^{a\dagger}(t_+)A^a(-t_+)e^{iHt_-}\,dt_+dt_-\,,
\end{align*}
with 
\begin{align*}
\hat{f}^\eta_+(t_+):=\frac{e^{-2\sigma_E^2t_+^2-i\beta\sigma_E^2t_+}+1(|t_+|\le \eta)i\big(\frac{2t_+}{\beta}+i\big)}{2\sqrt{2\pi}t_+\big(\frac{2t_+}{\beta}+i\big)}
\end{align*}
such that, as long as $\beta \sigma_E>1$,
\begin{align}
&\|\hat{f}^\eta_+\|_{L_1}\le \ln\Big(\frac{\beta}{\eta(\beta\sigma_E)^2}\Big)+2\ln\Big(2(\beta\sigma_E)^4\Big)+320\,,\\
&\|t\mapsto t f_-(t)\|_{L_1}\le \frac{2}{\sigma_E}\,,\label{bound1}\\
&\|t\mapsto t f_+(t)\|_{L_1}\le  \frac{1}{4\sigma_E}\,,\label{bound2}\\
&\|t\mapsto t \hat{f}^\eta_+(t)\|_{L_1}\le  \frac{1}{4\sigma_E}+\frac{\eta}{\sqrt{2\pi}}\,.\label{bound3}
\end{align}
\end{lemma}

\begin{proof}
The proof consists of a slight modification of the bounds derived in \cite{chen2023efficient}: first, we have that
\begin{align*}
N^{(\sigma_E,\beta)}_H-N^\eta=\frac{1}{2\sqrt{2\pi}}\sum_a\int_{\mathbb{R}} f_-(t_-) e^{-iHt_-}Q_a e^{iHt_-}\,dt_-\,,
\end{align*}
where 
\begin{align*}
Q_a=i\int_{-\eta}^\eta\,\frac{A^{a\dagger}(t)A^a(t)}{t}\,dt\,.
\end{align*}
By developing the numerator in the integrand of $Q_a$ in terms of $\cos(Ht)$ and $\sin(Ht)$, we end up with a sum of $8$ terms, where only one of which only contains a product of $3$ $\cos(Ht)$, and all the others containing at least one $\sin(Ht)$. While the former does not contribute to the overal integral by symmetry, the latter are all at most linear in $t$ as $t\to 0$. Therefore, $\|Q_a\|_\infty\le 14\eta \norm{H}_\infty$. Hence, 
\begin{align*}
\|N^{(\sigma_E,\beta)}_H-N^\eta\|_\infty\le \frac{7\eta \,m}{2\pi} \norm{H}_\infty \,. 
\end{align*}

It remains to show the bound on the $L_1$-norm of $\hat{f}^\eta_+$. By a similar argument to that from \cite{chen2023efficient}, using that $\hat{f}^\eta_+(t)=\hat{f}^\lambda_+(t)-\frac{1}{2\sqrt{2\pi}}\frac{i}{t}\,1(\eta<|t|\le \lambda)$ whenever $\lambda\ge \eta$, we get
\begin{align*}
\|\hat{f}_+^\eta\|_{L_1}&\le \frac{1}{2\sqrt{2\pi}}\left\|t\mapsto 1(\eta<|t|\le \lambda)\frac{i}{t}\right\|_{L_1}+\frac{1}{2\sqrt{2\pi}}\,\left\|t\mapsto \frac{e^{-2\sigma_E^2t^2-i\beta\sigma_E^2t}+1(|t|\le \lambda)i\big(\frac{2t}{\beta}+i\big)}{t\big(\frac{2t}{\beta}+i\big)}\right\|_{L_1}\\
&\le \ln\left(\frac{\lambda}{\eta}\right)+\left\|t\mapsto \frac{e^{-2\sigma_E^2t^2-i\beta\sigma_E^2t}+1(|t|\le \lambda)i\big(\frac{2t}{\beta}+i\big)}{t\big(\frac{2t}{\beta}+i\big)}\right\|_{L_1}.
\end{align*}
Next, we split the $L_1$ norm above into a part $|t|>\lambda$ and $|t|\le  \lambda$. First,
\begin{align*}
\left\|t\mapsto 1(|t|\ge \lambda)\,\frac{e^{-2\sigma_E^2t^2-i\beta\sigma_E^2t}+1(|t|\le \lambda)i\big(\frac{2t}{\beta}+i\big)}{t\big(\frac{2t}{\beta}+i\big)}\right\|_{L_1}&=\int_{|t|\ge \lambda}\frac{e^{-2\sigma_E^2t^2}}{|t|\sqrt{\frac{4t^2}{\beta^2}+1}}dt\\
&=2\int_{t\ge \lambda}\frac{e^{-2\sigma_E^2t^2}}{t\sqrt{\frac{4t^2}{\beta^2}+1}}dt\\
&\le 2\int_{t\ge \lambda}\frac{1}{t\sqrt{\frac{4t^2}{\beta^2}+1}}dt\\
&=2\operatorname{arsinh}\left(\frac{\beta}{2\lambda}\right)\,.
\end{align*}
Next, we consider the integral over $|t|\le \lambda$, we get 
\begin{align*}
&\left\|t\mapsto 1(|t|\le \lambda)\,\frac{e^{-2\sigma_E^2t^2-i\beta\sigma_E^2t}+i\big(\frac{2t}{\beta}+i\big)}{t\big(\frac{2t}{\beta}+i\big)}\right\|_{L_1}\\
&\qquad\qquad\qquad\qquad\qquad\le \left\| t\mapsto \frac{1(|t|\le \lambda)}{1+\frac{4t^2}{\beta^2}} \right\|_{L_2}\,\left\| t\mapsto 1(|t|\le \lambda)\,\left(1+\frac{4t^2}{\beta^2}\right)\frac{e^{-2\sigma_E^2t^2-i\beta\sigma_E^2t}+i\big(\frac{2t}{\beta}+i\big)}{t\big(\frac{2t}{\beta}+i\big)} \right\|_{L_2}\\
&\qquad\qquad\qquad\qquad\qquad=\sqrt{\beta  \operatorname{arctan}\left(\frac{2\lambda}{\beta}\right)}\,\left\| t\mapsto 1(|t|\le \lambda)\,\frac{\Big(\frac{2t}{\beta}-i\Big)e^{-2\sigma_E^2t^2-i\beta\sigma_E^2t}+i\Big(\frac{4t^2}{\beta^2}+1\Big)}{t} \right\|_{L_2}\\
&\qquad\qquad\qquad\qquad\qquad\le \sqrt{2\lambda}\,\left\| t\mapsto 1(|t|\le \lambda)\,\frac{\Big(\frac{2t}{\beta}-i\Big)e^{-2\sigma_E^2t^2-i\beta\sigma_E^2t}+i\Big(\frac{4t^2}{\beta^2}+1\Big)}{t} \right\|_{L_2}
\end{align*}
Next, we further control the $L_2$ norm above as follows
\begin{align*}
&\left\|t\mapsto 1(|t|\le  \lambda)\frac{\Big(\frac{2t}{\beta}-i\Big)e^{-2\sigma_E^2t^2-i\beta\sigma_E^2t}+i\Big(1+\frac{4t^2}{\beta^2}\Big)}{t}\right\|_{L_2}^2
\\
&\qquad\qquad={\int_{|t|\le \lambda}\frac{\left|\Big(\frac{2t}{\beta}-i\Big)e^{-2\sigma_E^2t^2-i\beta\sigma_E^2t}+i\big(1+\frac{4t^2}{\beta^2}\big)\right|^2}{t^2}}dt\\
&\qquad\qquad=\int_{|t|\le \lambda}\frac{\left|\frac{2t}{\beta}e^{-2\sigma_E^2t^2-i\beta\sigma_E^2t} -i\left(-2\sigma_E^2t^2-i\beta\sigma_E^2t\right)\sum_{k=0}^\infty\frac{ \left(-2\sigma_E^2t^2-i\beta\sigma_E^2t\right)^k}{(k+1)!}+i\frac{4t^2}{\beta^2}\right|^2}{t^2}\,dt\\
&\qquad\qquad=\,\int_{|t|\le \lambda}{\left|\frac{2}{\beta}e^{-2\sigma_E^2t^2-i\beta\sigma_E^2t} -i\left(-2\sigma_E^2t-i\beta\sigma_E^2\right)\sum_{k=0}^\infty\frac{ \left(-2\sigma_E^2t^2-i\beta\sigma_E^2t\right)^k}{(k+1)!}+i\frac{4t}{\beta^2}\right|^2}\,dt\\
&\qquad\qquad \le 2\lambda \left(\frac{2}{\beta}+2\sigma_E^2\lambda+\beta\sigma_E^2e^{\sigma_E^2\lambda^2+\beta\sigma_E^2\lambda}+\frac{4\lambda}{\beta^2}\right)^2\,.
\end{align*}
Choosing $\lambda=\frac{\beta}{(\beta\sigma_E)^4}$, we therefore get that, assuming $\beta\sigma_E>1$,
\begin{align*}
\left\|t\mapsto 1(|t|\le \lambda)\,\frac{e^{-2\sigma_E^2t^2-i\beta\sigma_E^2t}+i\big(\frac{2t}{\beta}+i\big)}{t\big(\frac{2t}{\beta}+i\big)}\right\|_{L_1}&\le 8\lambda\,\left(\frac{1}{\beta}+\sigma_E^2\lambda+\beta\sigma_E^2e^{\sigma_E^2\lambda^2+\beta\sigma_E^2\lambda}+\frac{\lambda}{\beta^2}\right)\le {320}\,.
\end{align*}
Combining the above bounds, we get
\begin{align*}
\|\hat{f}^\eta_+\|_{L_1}\le  \ln\left(\frac{\lambda}{\eta}\right)+2\operatorname{arsinh}\left(\frac{\beta}{2\lambda}\right)+320\le \ln\left(\frac{\beta}{\eta(\beta\sigma_E)^2}\right)+2\ln\Big(2(\beta\sigma_E)^4\Big)+320
\end{align*}
where in the last line we reused that $\beta\sigma_E>1$. Next, we prove the norm bounds claimed in \eqref{bound1} and \eqref{bound2}. Starting with $f_-$, we get
\begin{align*}
\left\|t\mapsto tf_-(t)\right\|_{L_1}&=\frac{2\sigma_E}{\pi\beta}\int_{\mathbb{R}^2}|t|\, \frac{1}{\cosh\Big(\frac{2\pi (t-u)}{\beta}\Big)}e^{-2\sigma_E^2 u^2}dudt\\
&= \frac{1}{\pi\beta}\int_{\mathbb{R}^2}\frac{|t|}{\cosh\Big(\frac{2\pi t}{\beta}-\frac{v}{\sigma_E\beta}\Big)}\,e^{-\frac{v^2}{2}}dvdt\\
&\le \frac{2}{\pi\beta}\int_{(v,t)\in\mathbb{R}\times \mathbb{R}_+}{t}e^{-\frac{2\pi t}{\beta}+\frac{v}{\sigma_E\beta}}\,e^{-\frac{v^2}{2}}dvdt+\frac{2}{\pi\beta}\int_{(v,t)\in\mathbb{R}\times \mathbb{R}_+}{t}e^{-\frac{2\pi t}{\beta}-\frac{v}{\sigma_E\beta}}\,e^{-\frac{v^2}{2}}dvdt\\
&=\frac{\beta}{\pi^2}\,\int\,e^{-\frac{v^2}{2}}\left(e^{-\frac{v}{\sigma_E\beta}}+e^{\frac{v}{\sigma_E\beta}}\right) \,dv\\
&=\frac{2\sqrt{2} e^{\frac{1}{2(\sigma_E\beta)^2}}}{\sigma_E\pi\sqrt{\pi}}\,.
\end{align*}
Finally, we derive the bound \eqref{bound2},
\begin{align*}
\|t\mapsto t {f}_+(t)\|_{L_1}=\frac{1}{2\sqrt{2\pi}}\,\left\|t\mapsto \frac{e^{-2\sigma_E^2t^2}}{\big(\frac{2t}{\beta}+i\big)}\right\|_{L_1}\le \frac{1}{4\sigma_E}\,.
\end{align*}
Equation \eqref{bound3} follows directly. 
\end{proof}

Using the bounds of Lemma \ref{lemmaNpart}, we derive a perturbation bound for $N^{(\sigma_E,\beta)}_H$.
\begin{lemma}\label{perturbedNpart}
Given the definition in Eq. \eqref{eq:NsigmabetaH}, assuming that $\beta \sigma_E>1$ and given $H=H_0+H_1$,
\begin{align*}
\|N^{(\sigma_E,\beta)}_H-N^{(\sigma_E,\beta)}_{H_0}\|_\infty\le \frac{6m\|H_1\|_\infty}{\sigma_E}\, \left(649+4\kappa\ln(\beta\sigma_E)\right)+\frac{3m\beta (\norm{H_1}_\infty+2\norm{H_0}_\infty)}{(\sigma_E\beta)^\kappa}\,,
\end{align*}
for any $\kappa\ge6$.

\end{lemma}

\begin{proof}
For $\eta>0$ to be fixed later, and denoting $N^{\eta}_H$, resp. $N^\eta_{H_0}$ the operators approximating $N^{(\sigma_E,\beta)}_H$ and $N^{(\sigma_E,\beta)}_{H_0}$ respectively,  
\begin{align*}
\|N^{(\sigma_E,\beta)}_{H}-N^{(\sigma_E,\beta)}_{H_0}\|_\infty&\le \|N^{(\sigma_E,\beta)}_{H}-N^\eta_{H}\|_\infty+\|N_{H}^\eta-N_{H_0}^\eta\|_\infty+\|N_{H_0}^\eta-N^{(\sigma_E,\beta)}_{H_0}\|_\infty\\
&\le \frac{7\eta m}{\pi} \left( \norm{H}_\infty+\norm{H_0}_\infty\right)+ \|N^\eta_H-N^\eta_{H_0}\|_\infty\,.
\end{align*}
Next, we control the infinite norm on the above right-hand side. By definition we have that
\begin{align*}
N^\eta_H-N^\eta_{H_0}&=\sum_a \int_{\mathbb{R}^2} f_-(t_-)\left(\hat{f}^\eta_+(t_+)+\sqrt{\frac{\pi}{8}}\delta(t_+)\right)\,(F_H(t_+,t_-)-F_{H_0}(t_+,t_-))\,dt_+dt_-\\
&=\sum_a \int_{\mathbb{R}^2} f_-(t_-)\hat{f}^\eta_+(t_+)\,(F_H(t_+,t_-)-F_{H_0}(t_+,t_-))\,dt_+dt_-\\
&\quad +\sqrt{\frac{\pi}{8}}\sum_a\int_{\mathbb{R}}f_-(t_-)(F_H(0,t_-)-F_{H_0}(0,t_-))dt_-\,,
\end{align*}
 where $F_H(t_+,t_-)=e^{-iHt_-} A_H^{a\dagger}(t_+)A_H^a(-t_+)e^{iHt_-}$, with $A^a_H(t):=e^{itH}A^a e^{-itH}$. Now, since $\|(\Delta_H^{it}-\Delta_{H_0}^{it})(A^{a,\alpha})\|_\infty\le 2|t|\,\|H_1 \|_\infty$, we have that $\|F_H(t_+,t_-)-F_{H_0}(t_+,t_-)\|_\infty\le 6(|t_+|+|t_-|)\|H_1 \|_\infty$. Therefore
 \begin{align*}
\|N^\eta_H-N^\eta_{H_0}\|_\infty&\le 6m\|H_1 \|_\infty\,\big(\|f_-\|_{L_1}\|t\mapsto t\hat{f}^\eta_+(t)\|_{L_1}+\|t\mapsto tf_-(t)\|_{L_1}(1+\|\hat{f}^\eta_+\|_{L_1})\big)\\
&\le \frac{6m\|H_1 \|_\infty}{\sigma_E}\, \left(649+4\kappa\ln(\beta\sigma_E)\right)
 \end{align*}
where the last bound follows from those derived in Lemma \ref{lemmaNpart}, and we chose $\eta=\frac{\beta}{(\beta\sigma_E)^\kappa}$ for $\kappa\ge 6$. The result follows.

 \end{proof}

With the bounds derived above, we are ready to state our perturbation bounds for $H\mapsto \widetilde{\mathcal{L}}^{(\sigma_E,\beta)}_H$:

\begin{lemma}\label{propperturbHH0}
Given the definitions in Eq.\eqref{eqLbetafourier} and the perturbed Hamiltonian $H=H_0+H_1$, assuming that $\beta \sigma_E>1$, 
\begin{align*}
\|\widetilde{\mathcal{L}}^{(\sigma_E,\beta)}_H-\widetilde{\mathcal{L}}^{(\sigma_E,\beta)}_{H_0}\|_{2\to 2}\le  \frac{6m\|H_1 \|_\infty}{\sigma_E}\, \left(650+4\kappa\ln(\beta\sigma_E)\right)+\frac{3m\beta  \left( \norm{H_1}_\infty+2\norm{H_0}_\infty\right)}{(\sigma_E\beta)^\kappa}\equiv \delta''_{\beta,\sigma_E}\,,
\end{align*}
for any $\kappa\ge 6$. Consequently, as long as $\delta''_{\beta,\sigma_E}<\operatorname{gap}(\widetilde{\mathcal{L}}^{(\sigma_E,\beta)}_{H_0})$,
\begin{align*}
\operatorname{gap}({\mathcal{L}}^{(\sigma_E,\beta)\dagger}_{H})>\operatorname{gap}({\mathcal{L}}^{(\sigma_E,\beta)\dagger}_{H_0})-\delta''_{\beta,\sigma_E}\,.
\end{align*}

\end{lemma}
\begin{proof}
Follows directly from Lemmas \ref{perturbedNpart} and \ref{perturbedTpart}.
\end{proof}

\subsection{Gap for generators of classical Hamiltonians}

The next main ingredient of our proof is a lower bound on the gap of $\widetilde{\mathcal{L}}^{(0,\infty)}$ corresponding to a classical local Hamiltonian. From now on, we fix the set of jumps to be 
\begin{align*}
 \{A^{a,\alpha}\}_{\alpha\in[4],a\in[n]}:=\{X_a\}\cup \{Y_a\}\cup \{Z_a\}\cup \{I_a\}\,.
\end{align*}
We recall that 
\begin{align}\label{Linftyexpression}
\widetilde{\mathcal{L}}^{(0,\infty)}(X)=\frac{1}{2}\sum_{a}\left(A^a_0 XA^{a}_0-\frac{1}{2}\{(A^{a}_0)^2,X\}\right)-\frac{1}{4}\sum_{a,\nu<0}\{A^{a\dagger}_\nu A^{a}_\nu,X\}\,.
\end{align}

\begin{proposition}\label{classicalH}
Assume that $H:=\sum_{x}E(x)|x\rangle\langle x|$ is diagonal in the $n$-qubit computational basis with ground state energy $E_0=0$. For any computational basis product states $|x\rangle, |y\rangle,|u\rangle,|v\rangle$, we denote $U(x,y;u,v):=\langle u|\widetilde{\mathcal{L}}^{(0,\infty)}(|x\rangle\langle y|)|v\rangle$. Then, $U(x,y;u,v)=0$ whenever $E(u)\ne E(x) \text{ or } E(v)\ne E(y)$. Else, 
\begin{align*}
U(x,y;u,v):=\left\{ \begin{aligned}
&0;\qquad\exists a\in[n],x_{a}\ne y_{a},\,u=x^{(a)},\,v=y^{(a)}\text{ or }|x-u|\ne |y-v|\text{ or } |x-u|>1 \\
&1;\qquad\qquad\qquad\qquad\qquad\quad\qquad\qquad\quad \, \,\,  \exists a\in[n],x_{a}=y_{a},\,u=x^{(a)},\,v=y^{(a)}\\
&\big|\big\{a\in\Lambda|  x_a=y_a\big\}\big|- n-\frac{1}{2}\big|\big\{ a|E(x^{(a)})=E(x)\big\}\big|-\frac{1}{2}\big|\big\{ a|E(y^{(a)})= E(y)\big\}\big|\\
& -\frac{1}{2}\Big(\big|\big\{a|E(x)>E(x^{(a)})\big\}\big|+\big|\big\{a|E(y)>E(y^{(a)})\big\}\big|\Big);\qquad \qquad x=u,\,y=v\,.
\end{aligned}\right.
\end{align*}
Above, we denoted by $|x-y|$ the Hamming distance between bitstrings $x$ and $y$, and by $x^{(a)}$ the bitstring $x$ with bit flipped at site $a$. Moreover, the spectral gap of $\widetilde{\mathcal{L}}_\infty$ can be lower bounded as follows:
\begin{align*}
\operatorname{gap}(\widetilde{\mathcal{L}}^{(0,\infty)})\ge \min\big\{\operatorname{gap}\big(\widetilde{\mathcal{L}}^{(0,\infty)}_{\operatorname{diag}}\big),1\big\}\,,
\end{align*}
where $\widetilde{\mathcal{L}}^{(0,\infty)}_{\operatorname{diag}}$ corresponds to the restriction of $\widetilde{\mathcal{L}}^{(0,\infty)}$ to the diagonal $\operatorname{span}\big(\{|x\rangle \langle x|\}_{x\in \{0,1\}^n}\big)$.
\end{proposition}
\begin{proof}
The expressions derived for $U$ follow by direct computation starting from \eqref{Linftyexpression} and using the fact that the eigenprojections of $H$ are sums over pure product states, as well as the twirling relation $\sum_{P\in \{I,X,Y,Z\}}PAP=2\tr{A}I$. Indeed,
\begin{align*}
U(x,y;&u,v)= \frac{\delta_{E(u),E(x)}\delta_{E(v),E(y)}}{2}\\
&\quad \quad \times\sum_{a,\alpha}\!\left\{\!\langle u|A^{a,\alpha}|x\rangle\langle y|A^{a,\alpha}|v\rangle-\frac{\delta_{y,v}}{2}\langle u|A^{a,\alpha}P_{E(x)}A^{a,\alpha}|x\rangle-\frac{\delta_{u,x}}{2}\langle y|A^{a,\alpha}P_{E(y)}A^{a,\alpha}|v\rangle\!\right\}\\
&  -\frac{\delta_{E(u),E(x)}\delta_{E(v),E(y)}}{4}\,\sum_{a,\alpha,i}\delta_{y,v}\delta_{E(x)>E_i}\,\langle u|A^{a,\alpha}P_{E_i}A^{a,\alpha}|x\rangle+\delta_{u,x}\delta_{E(y)>E_i}\langle y|A^{a,\alpha}P_{E_i}A^{a,\alpha}|v\rangle
\end{align*}
where $P_E$ denotes the eigenprojection onto eigenvalue $E$ of $H$. Assuming for now that the condition $E(u)=E(x)$ and $E(v)=E(y)$ holds, we have
\begin{align*}
&U(x,y;u,v)\\
& =\sum_{a}\delta_{u_{a^c},x_{a^c}}\delta_{y_{a^c},v_{a^c}}\delta_{x_a,y_a}\delta_{u_a,v_a}-\frac{\delta_{y,v}\delta_{u_a,x_a}}{2}\tr{P_{E(x)} |x_{a^c}\rangle\langle u_{a^c}|}-\frac{\delta_{u,x}\delta_{y_a,v_a}}{2}\tr{P_{E(y)} |v_{a^c}\rangle\langle y_{a^c}|}\\
&\quad  -\frac{1}{2}\,\sum_{a,i}\,\delta_{y,v} \delta_{E(x)>E_i}\,  \delta_{u_a,x_a}\tr{P_{E_i} |x_{a^c}\rangle\langle u_{a^c}|}+\delta_{u,x}\delta_{y_a,v_a}\delta_{E(y)>E_i}\tr{P_{E_i} |v_{a^c}\rangle\langle y_{a^c}|}\\
& =\sum_a \delta_{u_{a^c},x_{a^c}}\delta_{v_{a^c},y_{a^c}}\left\{\delta_{x_a,y_a}\delta_{u_a,v_a}- \delta_{y_a,v_a}\delta_{u_a,x_a}\Big(\delta_{E(x),E(x^{(a)})}+\delta_{E(y),E(y^{(a)})}\right. \\
&\qquad\qquad\qquad\qquad\qquad\qquad\qquad \left.+\frac{\delta_{E(x)\ne E(x^{(a)})}+\delta_{E(y)\ne E(y^{(a)})}}{2}+\frac{ \delta_{E(x)>E(x^{(a)})}+\delta_{E(y)>E(y^{(a)})}}{2}\Big)\right\}\\
& =\sum_a \delta_{u_{a^c},x_{a^c}}\delta_{v_{a^c},y_{a^c}}\Big\{\delta_{x_a,y_a}\delta_{u_a,v_a}- \delta_{y_a,v_a}\delta_{u_a,x_a}\times \\
&\qquad\qquad\qquad\qquad\qquad\qquad\qquad \big(1+\frac{\delta_{E(x),E(x^{(a)})}+\delta_{E(y),E(y^{(a)})}}{2}+\frac{ \delta_{E(x)>E(x^{(a)})}+\delta_{E(y)>E(y^{(a)})}}{2}\big)\Big\}\,,
\end{align*}
where we denoted e.g.~$|x_{a^c}\rangle\langle x_{a^c}|$ in place of $I_a\otimes |x_{a^c}\rangle\langle x_{a^c}|$ for simplicity. From this one can easily read out the different cases claimed in the statement of the proposition. Now, the generator $\widetilde{\mathcal{L}}^{(0,\infty)}$ clearly decomposes into the direct sum $\widetilde{\mathcal{L}}^{(0,\infty)}=\widetilde{\mathcal{L}}_{\operatorname{diag}}^{(0,\infty)}\oplus \widetilde{\mathcal{L}}_{\operatorname{off-diag}}^{(0,\infty)}$, where $\widetilde{\mathcal{L}}_{\operatorname{off-diag}}^{(0,\infty)}$ corresponds to the restriction of $\widetilde{\mathcal{L}}^{(0,\infty)}$ to off-diagonal elements $\{|x\rangle\langle y|\}_{x\ne y}$, since $\langle x|\widetilde{\mathcal{L}}^{(0,\infty)}(|y\rangle\langle z|)|x\rangle=\langle y|\widetilde{\mathcal{L}}^{(0,\infty)}(|x\rangle\langle x|)|z\rangle=0$ for any $x,y,z$ with $y\ne z$. Therefore, the claim about the gap would follow directly from showing that $\widetilde{\mathcal{L}}_{\operatorname{off-diag}}^{(0,\infty)}$ is bounded above by $-1$. Now, it is easy to see that for any $x\ne y$,
\begin{align}\label{eqgersh}
\sum_{(u,v)\ne (x,y)}U(x,y;u,v)&=\big|\big\{a|x_a=y_a,\,E(x)=E(x^{(a)}),E(y)=E(y^{(a)})\big\}\big|\nonumber \\
&\le \frac{1}{2}\big|\big\{a|E(x^{(a)})=E(x)\big\}\big|+\frac{1}{2}\big|\big\{a|E(y^{(a)})=E(y)\big\}\big|\nonumber\\
&\le n-1-|\{a|x_a=y_a\}|+\frac{1}{2}\big|\big\{a|E(x^{(a)})=E(x)\big\}\big|+\frac{1}{2}\big|\big\{a|E(y^{(a)})=E(y)\big\}\big|\nonumber\\
&\le -U(x,y;x,y)-1\,.
\end{align}
Therefore, by Gershgorin circle theorem, $\operatorname{spec}(\widetilde{\mathcal{L}}_{\operatorname{off-diag}}^{(0,\infty)})\subseteq (-\infty,-1]$. The result follows.
\end{proof}

\subsection{Kitaev circuit-to-Hamiltonian mappings and universality}\label{app:kitaev}

We now show how to use the results of the last Section to conclude that the class of dissipative evolutions we considered so far, when enlarged with the ability to perform arbitrary few-qubit measurements, provides a model of quantum computation that is polynomially equivalent to the circuit model for $\beta=\Omega(\operatorname{poly}(n))$. To achieve this, we resort to the circuit-to-Hamiltonian mapping of \cite{chen2023local}, similar to the ones  in~\cite{aharonov2008adiabatic,kitaev2002classical}. Consider an $n$-qubit circuit $C=U_T\dots U_1$ made of $T$ geometrically local unitary gates $U_1,\dots ,U_T$, each operating on one or two neighbouring qubits on a $2D$ lattice. For simplicity and without loss of generality, we will assume that $T$ is divisible by $2$ in what follows. For the clock construction of~\cite{chen2023local}, it will be important that for some $L=\textrm{poly}(n)$, the first and last $t_0=cL^2$ gates are identity and only  the intermediary $L$ gates act nontrivially. Clearly, we can map any polynomial-sized circuit into such a circuit with only polynomial overheads.
Next, we define the following Hamiltonian on $n+T$ qubits

\begin{align}\label{equ:definition_history_Hamiltonian}
H_{C}:=H_{\operatorname{clock}}+ H_{\operatorname{in}}+ H_{\operatorname{prop}}\,,
\end{align}
where 
\begin{align*}
&H_{\operatorname{clock}}:=J_{\operatorname{clock}}\sum_{t=1}^{T-1}f_t\,I\otimes |01\rangle\langle 01|_{t,t+1}\equiv H_0\\
&H_{\operatorname{in}}:=J_{\operatorname{in}}\sum_{j=1}^ng_j\,|1\rangle\langle 1|_j\otimes |10\rangle\langle 10|_{t_j-1,t_j}\\
&H_{\operatorname{prop}}:=\frac{1}{2}J_{\operatorname{prop}}\,\sum_{t=1}^T H_{\operatorname{prop}}(t)\,.
\end{align*}
The couplings $J_{\operatorname{clock}}$, $J_{\operatorname{in}}$ and $J_{\operatorname{prop}}$ are to be fixed later, and we have that
\begin{align*}
&H_{\operatorname{prop}}(t):=I-h_t(U_t\otimes \ketbra{110}{100}_{t-1,t,t+1}+U_t^\dagger\otimes |100\rangle\langle 110|_{t-1,t,t+1})\,, 
\end{align*}
for $1< t< T$ and
\begin{align*}
&H_{\operatorname{prop}}(1):=I-h_1(U_1\otimes |10\rangle\langle 00|_{1,2}+U_1^\dagger\otimes |00\rangle\langle 10|_{1,2}),\\
&H_{\operatorname{prop}}(T):=I-h_T(U_T\otimes |11\rangle\langle 10|_{T-1,T}+U_T^\dagger \otimes |10\rangle\langle 11|_{T-1,T})\,.
\end{align*}
Above, each of the times $t_j$ in the definition of $H_{\operatorname{in}}$ corresponds to the first time qubit $j$ is acted on. We also set
\begin{align*}
J_{\operatorname{clock}}=1,\qquad f_t=(T-t)/T,\qquad g_j=1/\xi_{t_j-1}:=2^T\binom{T}{t_j-1}^{-1},\qquad h_t=\sqrt{t(T-t+1)}\,.
\end{align*}
The construction of the circuit in \cite{chen2023local}, particularly the fact that only intermediary gates act nontrivially, ensures that $g_j=\mathcal{O}(T)$ for all $j$. We recall a few important properties of this Hamiltonian, as proven in \cite{chen2023local}.
\begin{proposition}\label{prop:properties_hc}
The Hamiltonian $H_{C}$ defined in Equation \eqref{equ:definition_history_Hamiltonian} has a unique ground state of energy $0$ given by 
\begin{align*}
|\eta_0\rangle =\sum_{t=0}^T\,\sqrt{\xi_t}\,(U_t\dots U_1)|0^n\rangle \otimes |1^t0^{T-t}\rangle\,, \text{ where }\,\xi_t:=\frac{1}{2^T}\binom{T}{t}\,.
\end{align*}
$H_{\operatorname{clock}}$ has a spectral gap of $1/T$ and $\Delta_E(H_{\operatorname{clock}})\ge \Delta_\nu(H_{\operatorname{clock}})\ge 1/T$. Moreover, the orthogonal complement $H_C|_{\operatorname{Im}(H_C)}$  can be block decomposed as follows: denoting by $P'$ the projection onto the kernel of $H_{\operatorname{clock}}+H_{\operatorname{prop}}$, and since $P'|\eta_0\rangle=|\eta_0\rangle$, we can decompose $\operatorname{Im}(H_C)$ into the direct sum $\operatorname{Im}(H_C):=\cH_1\oplus\cH_2$, where $\cH_1:=(P'-|\eta_0\rangle\langle \eta_0|)\operatorname{Im}(H_C)$ and $\cH_2=(I-P')\operatorname{Im}(H_C)$, such that
\begin{align*}
H_C|_{\operatorname{Im}(H_C)}=\begin{pmatrix}
A&B\\
B^\dagger &C
\end{pmatrix}
\end{align*}
with $A\ge J_{\operatorname{in}}I$, $C\ge (J_{\operatorname{prop}}-2\|H_{\operatorname{in}}\|_\infty)I$, and $\|B\|_\infty\le \|H_{\operatorname{in}}\|_\infty$.
\end{proposition}

From the above proposition, one can easily deduce the following properties of $H_C$:

\begin{corollary}\label{overlapgibbsground}
Assume that $\frac{3}{2}J_{\operatorname{in}}\ge \|H_{\operatorname{in}}\|_\infty^2(J_{\operatorname{prop}}-2\|H_{\operatorname{in}}\|_\infty)^{-1}$. Then, the Hamiltonian $H_C$ has a spectral gap $\operatorname{gap}(H_C)\ge \min\{J_{\operatorname{in}}/2,J_{\operatorname{prop}}-2\|H_{\operatorname{in}}\|_\infty\}$. Moreover, the Gibbs state $\sigma_{\beta,H_C}$ at inverse temperature $\beta$ satisfies $\|\sigma_{\beta}-|\eta_0\rangle\langle \eta_0|\|_1\le \epsilon$ for $\beta\ge (\min\{J_{\operatorname{in}}/2,J_{\operatorname{prop}}-2\|H_{\operatorname{in}}\|_\infty\})^{-1}\, \ln(2^{n+T}/\epsilon)$.
\end{corollary}

\begin{proof}
We start by proving the claimed lower bound on the spectral gap. For this, we use the standard observation that $H_C|_{\operatorname{Im}(H_C)}$ is related to the diagonal matrix 
\begin{align*}
\begin{pmatrix}
A-BC^{-1}B^\dagger&0\\0& C
\end{pmatrix}
\end{align*}
up to a similarity transformation. Therefore, $\operatorname{gap}(H_C)\ge \min\{\lambda_{\min}(A-BC^{-1}B^\dagger),\lambda_{\min}(C)\}$.  Moreover, by Proposition \ref{prop:properties_hc}, we have that $\lambda_{\min}(C)\ge (J_{\operatorname{prop}}-2\|H_{\operatorname{in}}\|_\infty)I$.
\begin{align*}
A-BC^{-1}B^\dagger\ge \big(J_{\operatorname{in}}-\|H_{\operatorname{in}}\|_\infty^2(J_{\operatorname{prop}}-2\|H_{\operatorname{in}}\|_\infty)^{-1}\big)I\,.
\end{align*}
For the latter lower bound to be larger than $J_{\operatorname{in}}/2$, we need to impose that
\begin{align*}
\frac{3}{2}J_{\operatorname{in}}\ge \|H_{\operatorname{in}}\|_\infty^2(J_{\operatorname{prop}}-2\|H_{\operatorname{in}}\|_\infty)^{-1}\,.
\end{align*}
The bound on the spectral gap of $H_C$ follows. Next, denoting by $E_i$ the eigenvalues of $H_C$ in increasing order, with $E_0=0$, we can control the trace distance as
\begin{align*}
\|\sigma_{\beta,H_C}-|\eta_0\rangle\langle \eta_0|\|_1=1-\frac{1}{\tr{e^{-\beta H_C}}}=1-\frac{1}{1+\sum_{i\ge 1}e^{-\beta E_i}}&\le 1-\frac{1}{1+(2^{n+T}-1)e^{-\beta \operatorname{gap}(H_C)}}\\
&\le 2^{n+T}e^{-\beta\operatorname{gap}(H_C)}\,.
\end{align*}
The result follows. 
\end{proof}
Next, we provide a lower bound on the spectral gap of $\widetilde{\mathcal{L}}^{(0,\infty)}_{H_{\operatorname{clock}}}$. The proof is inspired by \cite{caputo2011zero}.

\begin{proposition}\label{propgap0inftyclock}
Given the definition in Eq. \eqref{eq:L0inf},
\begin{align*}
\operatorname{gap}\big(\widetilde{\mathcal{L}}^{(0,\infty)}_{{H}_{\operatorname{clock}}}\big)\ge \min\left\{2\Big(1-\cos\Big(\frac{\pi}{T+1}\Big)\Big),1\right\}\,.
\end{align*}
\end{proposition}
\begin{proof}

By Proposition \ref{classicalH}, we have that $\operatorname{gap}\big(\widetilde{\mathcal{L}}^{(0,\infty)}_{{H}_{\operatorname{clock}}}\big)=\min\big\{\operatorname{gap}\big(\widetilde{\mathcal{L}}^{(0,\infty)}_{{H}_{\operatorname{clock}},\operatorname{diag}}\big),1\big\}$, where $\widetilde{\mathcal{L}}^{(0,\infty)}_{{H}_{\operatorname{clock}},\operatorname{diag}}$ corresponds to the restriction of $\widetilde{\mathcal{L}}^{(0,\infty)}_{{H}_{\operatorname{clock}}}$ to the diagonal $\operatorname{span}\big(\{|x\rangle \langle x|\}_{x}\big)$.
Thus, it only remains to control the spectral gap of $\widetilde{\mathcal{L}}^{(0,\infty)}_{{H}_{\operatorname{clock}},\operatorname{diag}}$, or equivalently that of the matrix $\{U_{\operatorname{diag}}(x;u):=U(x,x;u,u)\}_{x,u}$ as defined in Proposition \ref{classicalH} in order to conclude. The coefficients of $U_{\operatorname{diag}}$ cancel whenever $E(u)\ne E(x)$. Else,

\begin{align*}
U_{\operatorname{diag}}(x;u):=\left\{ \begin{aligned}
&0\qquad\,\,\, \qquad\qquad\qquad\qquad\qquad\qquad\qquad\qquad\qquad\qquad\,\,\,\,\quad\,\qquad|x-u|>1 \\
&1\qquad \,\,\qquad\qquad\qquad\qquad\qquad\qquad\qquad\qquad\qquad\qquad\quad \, \,\,  \exists a\in[n],\,u=x^{(a)}\\
&-\big|\big\{ a|E(x^{(a)})\le E(x)\big\}\big|\qquad\qquad \qquad\qquad\qquad\,\quad  \qquad \qquad \, \, \, \,\,\,\qquad x=u\,.
\end{aligned}\right.
\end{align*}
In words, the matrix $U_{\operatorname{diag}}$ decomposes into a direct sum $U_{\operatorname{diag}}=\bigoplus_{E\in\operatorname{spec}({H}_{\operatorname{clock}})}U_E$ of blocks associated to each energy $E=E(u)=E(x)$. For the block corresponding to the lowest energy $E_0=0$, 
\begin{align*}
U_0(x;u):=\left\{ \begin{aligned}
&0\qquad\,\,\, \qquad\qquad\qquad\qquad\qquad\qquad\qquad\qquad\qquad\qquad\,\,\,\,\quad\,\qquad|x-u|>1 \\
&1\qquad \,\,\qquad\qquad\qquad\qquad\qquad\qquad\qquad\qquad\qquad\qquad\quad \, \,\,  \exists a\in[n],\,u=x^{(a)}\\
&-\big|\big\{ a|E(x^{(a)})=0\big\}\big|\qquad\qquad\qquad\qquad\qquad\qquad
\quad \qquad\quad\quad \, \, \, \, \,\qquad\, x=u\,.
\end{aligned}\right.
\end{align*}
This turns out to be the generator of a graph Laplacian corresponding to the graph whose vertices are the bitstrings of energy $E_0=0$, and whose edges connect vertices of Hamming distance $1$ from each other. In the specific case where $H={H}_{\operatorname{clock}}$, the ground space is spanned by the $T+1$ vectors $|0\dots 0\rangle,\,|1,0,\dots,0\rangle,\,\dots ,\, |1,\dots ,1\rangle$. In this ordered basis, $U_0$ takes the form of the $(T+1)\times (T+1)$ matrix
\begin{align*}
U_0:=\begin{pmatrix}
-1&1&0&0&\dots &\dots &0\\
1&-2&1&0&\dots&\dots &0\\
0&1&-2&1&0&\dots &0\\
\vdots&\ddots&\ddots &\ddots&1 &-2 &1\\
\vdots&\ddots&\ddots &\ddots&0 &1 &-1
\end{pmatrix}\,.
\end{align*}
The spectrum of this matrix was computed in \cite{parlangeli2011reachability}. In particular, it is easy to see that the uniform distribution $(T+1)^{-1}(1,\dots,1)$ is the unique invariant measure of $U_0$. Moreover, the gap is equal to $2(1-\cos\big(\frac{\pi}{T+1}\big))$. Next, we consider the higher blocks $U_E$, $E>0$. It is easy to see that for any bitstring $x$ of energy $E$, and any bit $a\in[T]$, the energy of $x^{(a)}$ is always different from that of $x$ by at least $1/T$. Therefore, $U_E$ is diagonal. Moreover, one can always find a bit $a$ such that $E(x^{(a)})<E(x)$: indeed, since $E(x)>0$, there must be a $01$ transition in $x$. Then, updating $01$ to $11$, the energy decreases by $1/T$. Therefore, $U_E\le -1$. Thus, we have proved that

\begin{align*}
 \operatorname{gap}\big(\widetilde{\mathcal{L}}^{(0,\infty)}_{{H}_{\operatorname{clock}}}\big)=\min\big\{\operatorname{gap}\big(\widetilde{\mathcal{L}}^{(0,\infty)}_{{H}_{\operatorname{clock}},\operatorname{diag}}\big),1\big\}=\min\left\{2\Big(1-\cos\Big(\frac{\pi}{T+1}\Big)\Big),1\right\}\,.
\end{align*}

\end{proof}

\subsection{Proof of Theorem \ref{th:gibbsQP}} \label{thmCproof}
Let us now put all the technical ingredients together. We show that there exists a choice of parameters that ensures that the Lindbladian $\mathcal{L}^{(\sigma_E,\beta)\dagger}_{H_C}$ has a gap that is at least polynomially small and that the Gibbs state at temperature $\beta$ has constant overlap with the ground state of $H_C$. This ensures that we can obtain the output of the circuit $C$ from a constant number of rounds of local measurements on the Gibbs state. This concludes the proof of the inclusion of $\textrm{BQP}$ in $\textrm{GibbsQP}$.

Without loss of generality, we assume that $\cos\Big(\frac{\pi}{T+1}\Big)>1/2$, i.e. $T>3$. In Proposition \ref{propgap0inftyclock}, we have proven that
\begin{align*}
\operatorname{gap}\big(\widetilde{\mathcal{L}}^{(0,\infty)}_{{H}_{\operatorname{clock}}}\big)\ge \min\left\{2\Big(1-\cos\Big(\frac{\pi}{T+1}\Big)\Big),1\right\}\ge 2\Big(1-\cos\Big(\frac{\pi}{T+1}\Big)\Big)\ge \frac{1}{T^2}\,.
\end{align*}
With Lemma \ref{le:lemmainftytobeta}, and using that $\max_{a,\nu}\|A^a_\nu\|_\infty\le 1$, $m=4(T+n)$, $M= T^3$, as well as $\Delta_E(H_{\operatorname{clock}})\ge \Delta_\nu(H_{\operatorname{clock}})\ge 1/T$ (cf. Proposition \ref{prop:properties_hc}), we get that  
\begin{align*}
\beta>2(n+T)\ln\left(2(n+T)^9\right)\Rightarrow \operatorname{gap}\big(\mathcal{{L}}^{(0,\beta)\dagger}_{{H}_{\operatorname{clock}}}\big)\ge \frac{1}{2T^2}\,.
\end{align*}
Combining this bound with Lemma \ref{le:perturbsigmaE}, we find that for $\beta,\sigma_E$  such that $\sigma_E<1$ and
\begin{align*}
\beta>16(n+T)\ln(64(n+T)^{15})\Rightarrow \operatorname{gap}\big(\mathcal{{L}}^{(\sigma_E,\beta)\dagger}_{{H}_{\operatorname{clock}}}\big)\ge \frac{1}{4T^2}\,.
\end{align*}
Let us assume that $\norm{H_1}_\infty \le \norm{H_0}_\infty \le J_\text{clock} T= T$. 
 With Corollary \ref{propperturbHH0}, taking $\kappa=6$ and letting $\beta=T^\delta/\sigma_E$ for some $\delta>1$ to be fixed later, 
 \begin{align*}
\frac{\|H_1\|_\infty}{\sigma_E}\le \frac{1}{600000\delta (n+T)^3\ln(n+T)} \quad \text{and} \quad  \beta \le \frac{1}{192}(n+T)^{6\delta-4} 
\Rightarrow \operatorname{gap}(\mathcal{L}_{H_C}^{(\sigma_E,\beta)\dagger})\ge \frac{1}{8T^2}\,.
 \end{align*}
 By construction, we have that $\|H_1\|_\infty\le \|H_{\operatorname{prop}}\|_\infty+\|H_{\operatorname{in}}\|_\infty=\mathcal{O}(J_{\operatorname{prop}}T^2+J_{\operatorname{in}}nT)$. Therefore, the above conditions  on $\|H_1\|_\infty$ can be enforced by choosing both
\begin{align*}
	J_{\operatorname{prop}},J_{\operatorname{in}}\le\frac{\sigma_E}{1200000(n+T)^5\delta\ln(n+T)}\,.
\end{align*}
Moreover, in order for the evolution to converge to a Gibbs state $\sigma_\beta$ with large enough overlap with the ground state $\ket{\eta_0}$ of $\sigma_{\beta,H_C}$, we saw in Corollary \ref{overlapgibbsground} that it is sufficient to require that $J_{\operatorname{prop}}=\Omega(J_{\operatorname{in}}(nT)^2)$ and
\begin{align*}
	\beta=\Omega\left(\frac{n+T}{J_{\operatorname{in}}}\right)\,.
\end{align*}
To reconcile all these conditions, we hence take, for $\sigma_E<1$,
\begin{align*}
	J_{\operatorname{prop}}=\mathcal{O}\left(\frac{\sigma_E}{(n+T)^5\delta\ln(n+T)}\right),\quad J_{\operatorname{in}}=\mathcal{O}\left(\frac{\sigma_E}{(n+T)^{9}\delta\ln(n+T)}\right)\,,
\end{align*}
so that 
\begin{align*}
 \beta=\Omega\left(\frac{(n+T)^{10
 }\delta\ln(n+T)}{\sigma_E}\right)\,.
\end{align*}
Since we had chosen $\beta\sigma_E=T^\delta$, this imposes that $\delta\leq 11$. Finally, $\sigma_E$ is chosen so that $\beta=\mathcal{O}\big((n+T)^{6\delta-4}\big)$, so it can even be chosen to be a constant $\sigma_E=\mathcal{O}(1)$. The result follows.

\end{document}